\newtheorem{theorem}{Theorem}
\newtheorem{lemma}[theorem]{Lemma}
\newtheorem*{conjecture}{Conjecture}
\newcommand{\figuredir}{./figures}
\begin{document}

\title{Fence patrolling by mobile agents\\with distinct speeds\footnote{%
A preliminary version of this paper was presented at the
23rd International Symposium on
Algorithms and Computation~\cite{isaac}.}}
\author{%
Akitoshi Kawamura\thanks{%
Supported by Grant-in-Aid for Scientific Research, 
Japan.
}
\and
Yusuke Kobayashi\thanks
{
Supported by Grant-in-Aid for Scientific Research
and by the Global COE Program
``The research and training center for new development in mathematics'', 
MEXT, Japan.
}
}
\date{}

\maketitle

\begin{abstract}
Suppose we want to patrol a fence (line segment) 
using $k$ mobile agents with given speeds $v _1$, \ldots, $v _k$ 
so that every point on the fence 
is visited by an agent at least once in every unit time period. 
Czyzowicz et al.\ conjectured that 
the maximum length of the fence that can be patrolled 
is $(v _1 + \dots + v _k) / 2$, 
which is achieved by the simple strategy 
where each agent~$i$ moves back and forth 
in a segment of length $v _i / 2$. 
We disprove this conjecture by a counterexample involving $k = 6$ agents. 
We also show that the conjecture is true for $k \leq 3$.
\end{abstract}

\section{Introduction}

\emph{Patrolling} is a well-studied task in robotics. 
A set of mobile agents move around a given area 
to protect or supervise it, 
with the goal of ensuring that 
each point in the area is visited frequently enough~%
\cite{Chevaleyre04,Collins13,esa2011,ElmaliachSK2008,YanovskiWB2003}. 
While many authors study heuristic patrolling strategies for various settings
and analyze their performance through experiment, 
recent studies on 
theoretical optimality of strategies 
have revealed that 
there are interesting questions and intricacies 
even in the simplest settings~\cite{esa2011,pasqualetti_franchi_bullo_2010}. 

One of the fundamental problems considered by 
Czyzowicz et al.~%
\cite{esa2011} 
is to patrol a line segment (called the \emph{fence})
using $k$ mobile agents with given speeds. 
They showed that the simple partition-based strategy, 
which is used as parts of many strategies in more general problems~%
\cite{Chevaleyre04, Collins13, ElmaliachSK2008, pasqualetti_franchi_bullo_2010}, 
is optimal in this setting for $k=2$. 
They conjectured that it is also optimal for every $k$. 
In this paper, we prove that the conjecture 
holds for $k = 3$
(Section~\ref{section: optimal}), 
but fails in general (Section~\ref{sec:k=6}). 

\paragraph{Formal description of fence patrolling.}
We are given a line segment of length $l$, which is identified with the interval $[0, l]$. 
A set of points (mobile agents) $a_1, a_2, \dots , a_k$ move along the segment. 
They can move in both directions, and 
can pass one another.
The speed of each agent~$a _i$ may vary during its motion, but
its absolute value is bounded by the 
predefined maximum speed $v _i$. 
The position of agent $a _i$ at time $t$ 
is denoted $a _i (t)$. 
Thus, the motion of the agent $a _i$ is 
described by a function $a _i \colon [0, \infty) \to [0, l]$ 
satisfying $
|a_i(t) - a_i(t+\epsilon)| \leq v_i \cdot \epsilon 
$ for any $t \ge 0$ and $\epsilon >0$. 
A \emph{strategy} (or \emph{schedule}) is given by 
a $k$-tuple of such functions $a _i$. 

For a position $x \in [0, l]$ and time $t ^* \in [0, \infty)$, 
the agent $a _i$ is said to \emph{cover} $(x; t ^*)$ if 
  $a _i(t) = x$ for some $t \in [t^* - 1, t^*)$. 
A strategy is said to 
\emph{patrol} the segment $[0, l]$ 
if 
for any $x \in [0, l]$ and 
  $t ^* \in [1, \infty)$, 
some agent $a _i$ covers $(x; t ^*)$. 

Given the speeds $v_1$, \ldots, $v_k$, 
we want a strategy that patrols the longest possible fence. 
This is equivalent, through scaling, to 
fixing the length of the fence
and minimizing the time, often called the \emph{idle time}, 
during which some point is left unattended by any agent. 

\paragraph{The partition-based strategy.}
An obvious strategy for fence patrolling is as follows: 
partition the fence $[0, l]$ into $k$ segments, 
proportionally to the maximum speeds $v _1$, \ldots, $v _k$, 
and let each agent $a_i$ patrol the $i$th segment
by alternately visiting both endpoints with its maximum speed. 
We call this the \emph{partition-based strategy}. 

Since each agent $a_i$ can patrol a segment of length 
$v _i / 2$, 
the partition-based strategy can patrol a 
segment of length 
$l = (v _1 + \dots + v _k) / 2$. 
Czyzowicz et al.~\cite{esa2011} observed that 
this is optimal when $k=2$. 
They conjectured that it is also the case for every $k$, 
that is, 
a segment of length 
$l > (v _1 + \dots + v _k) / 2$ 
cannot be patrolled. 

In this paper, we disprove this conjecture by demonstrating
$k=6$ agents that patrol
a fence of length greater than 
$(v _1 + \dots + v _k) / 2$
(Theorem~\ref{thm:k=6}). 
On the other hand, we show that
the partition-based strategy is optimal when $k=3$
(Theorem~\ref{thm:k=3}). 


\section{The partition-based strategy is not always optimal}
\label{sec:k=6}

\begin{figure}
\begin{center}
\includegraphics[scale=.77]{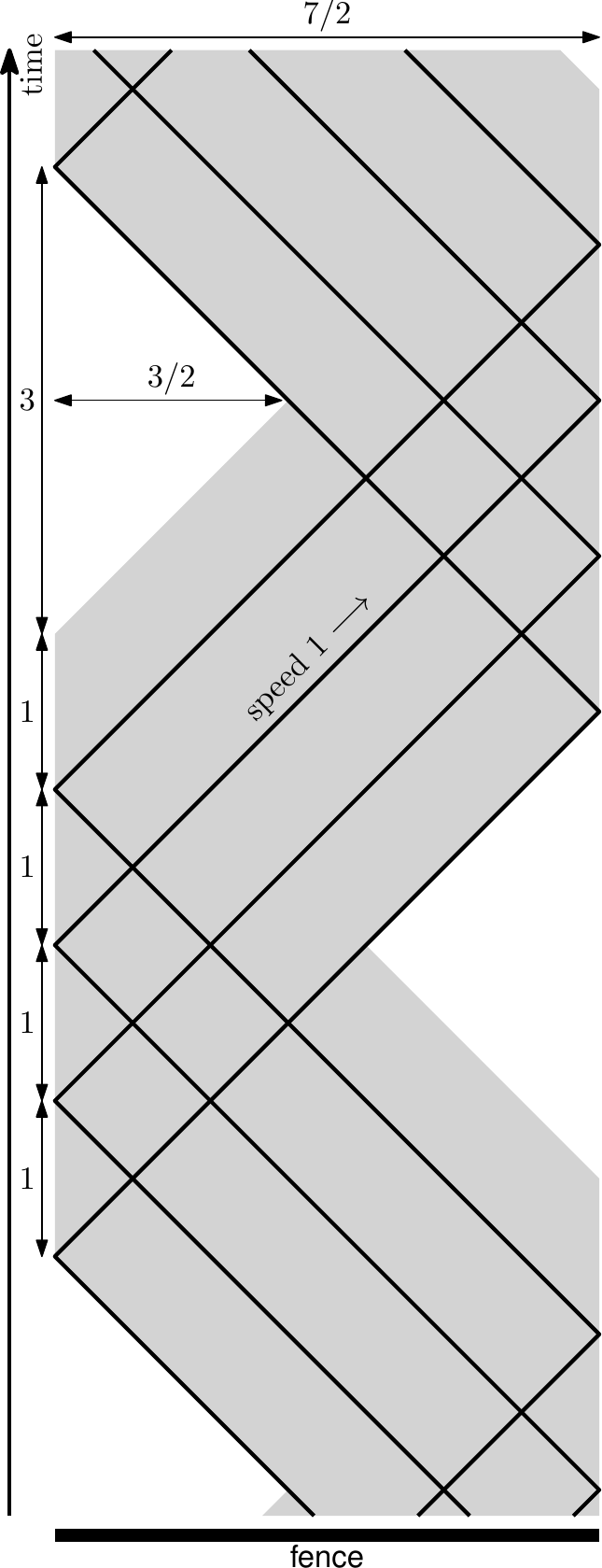}%
\hfill
\includegraphics[scale=.77]{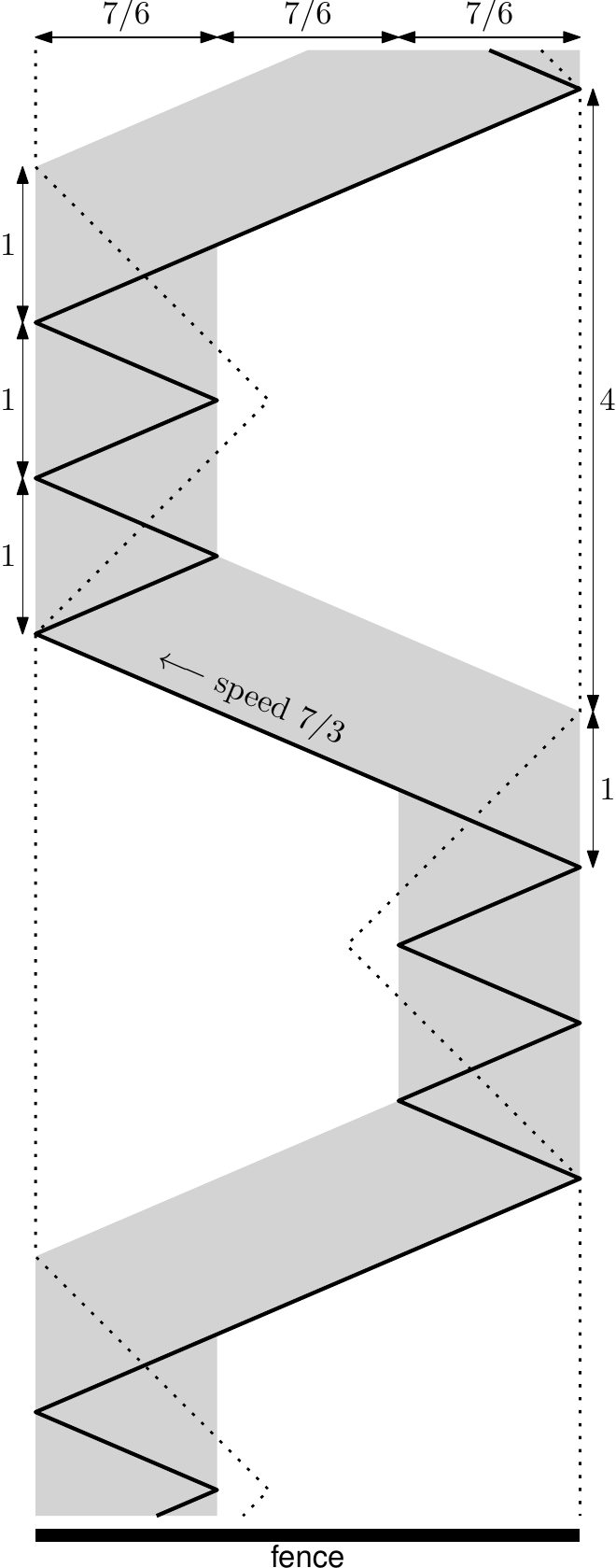}%
\hfill
\includegraphics[scale=.77]{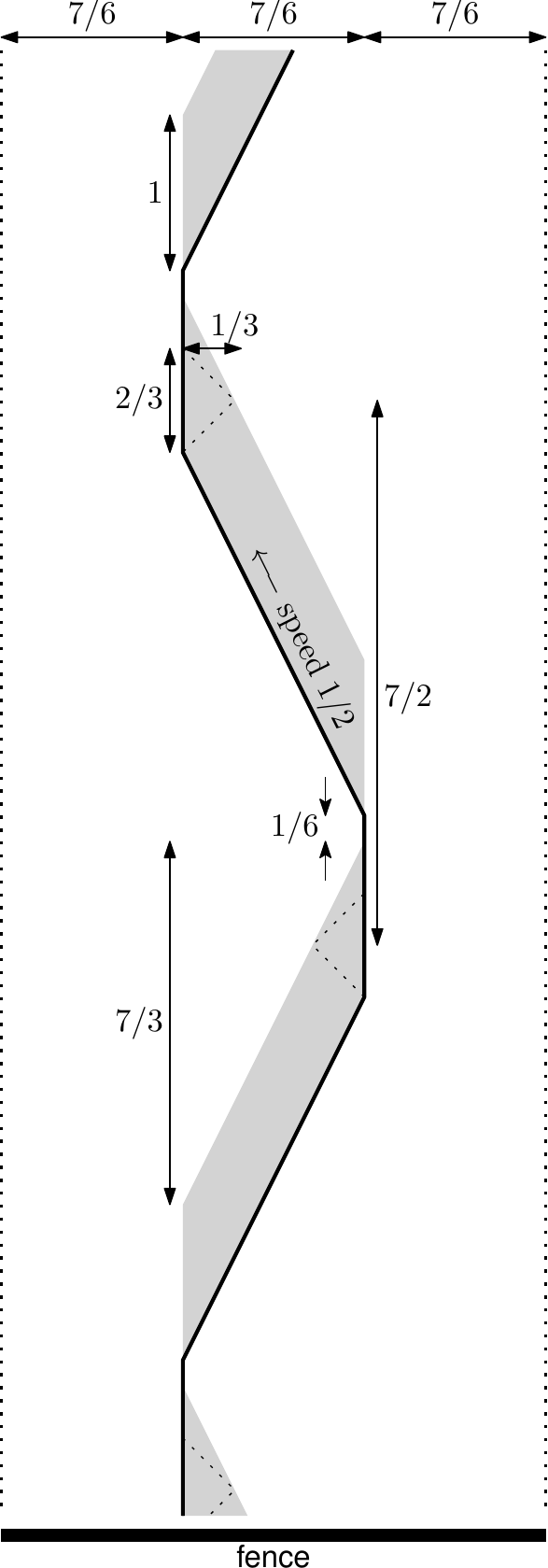}%
\caption{%
Six agents 
patrolling a longer fence than they would with the partition-based strategy. 
}
\label{figure: six agents}
\end{center}
\end{figure}

Fig.~\ref{figure: six agents} shows 
six agents with speeds $1$, $1$, $1$, $1$, $7 / 3$, $1 / 2$
who patrol a fence of length $7 / 2$. 
The fence is placed horizontally and time flows upwards. 
The region covered by each agent is shown shaded (i.e., 
the agent itself moves along the lower edge of each shaded band
of height~$1$). 
This strategy is periodic in the sense that 
each agent repeats its motion every $7$ unit times. 
The four agents with speed~$1$, shown in the diagram on the left, 
visit the two endpoints alternately. 
The region covered by them is shown again by the dotted lines in the middle diagram, 
where another agent with speed $7 / 3$ covers most of the remaining region, but 
misses some small triangles. 
They are covered by the last agent with speed $1 / 2$ 
in the diagram on the right. 
Note that the partition-based strategy with these agents 
would only patrol the length
$(1 + 1 + 1 + 1 + 7 / 3 + 1 / 2) / 2 = 41 / 12 < 7 / 2$. 
Thus, 

\begin{figure}
\begin{center}
\includegraphics[scale=.85]{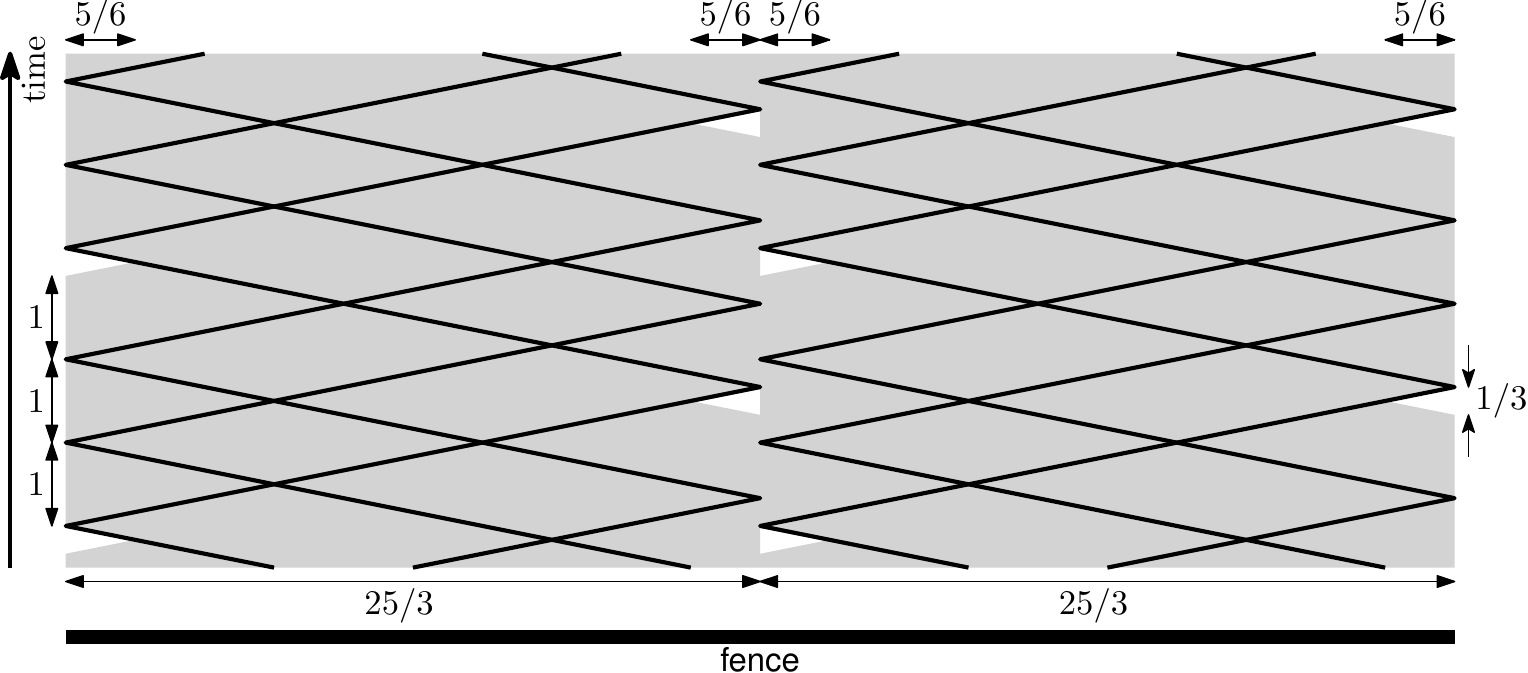}%
\\[10pt]
\includegraphics[scale=.85]{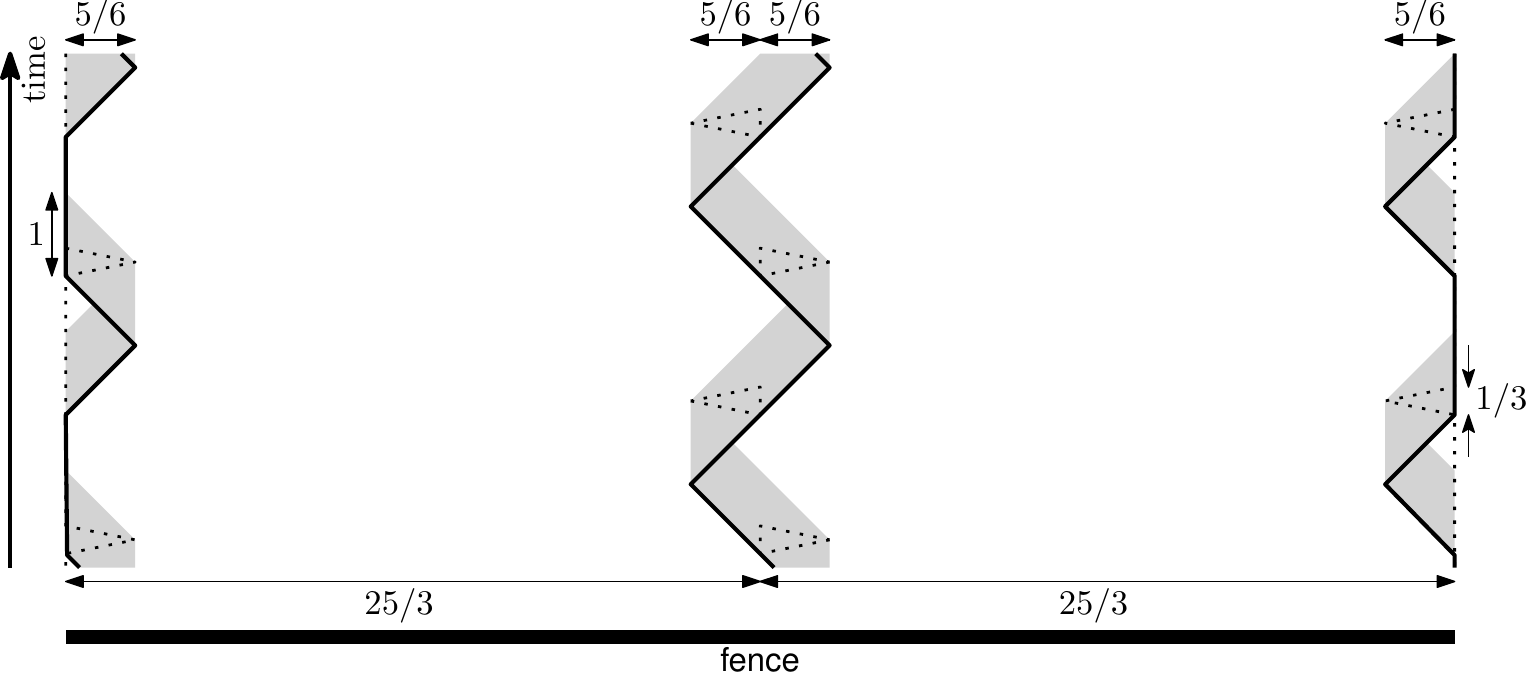}%
\caption{%
Six agents with speed~$5$ (top) and three agents with speed~$1$ (bottom)
together patrolling a longer fence than they would with the partition-based strategy.
}
\label{figure: nine agents}
\end{center}
\end{figure}

\begin{theorem}\label{thm:k=6}
There are settings of agents' speeds for which 
the partition-based strategy is not optimal. 
\end{theorem}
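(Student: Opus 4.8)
The plan is to prove Theorem~\ref{thm:k=6} by exhibiting an explicit counterexample, namely the six-agent configuration already depicted in Fig.~\ref{figure: six agents}: agents with maximum speeds $1,1,1,1,7/3,1/2$ patrolling the fence $[0,7/2]$. Since the partition-based strategy achieves only length $(1+1+1+1+7/3+1/2)/2 = 41/12 < 7/2$, it suffices to verify that the displayed strategy is a legitimate patrolling strategy for a fence strictly longer than that bound. Concretely, I would present each of the six agents' motions as explicit piecewise-linear functions $a_i\colon[0,\infty)\to[0,7/2]$, periodic with period $7$, and then check two things: first, that each $a_i$ respects its speed constraint $|a_i(t)-a_i(t+\epsilon)|\le v_i\,\epsilon$ (immediate from the slopes of the linear pieces); second, that together they cover every pair $(x;t^*)$ with $x\in[0,7/2]$ and $t^*\in[1,\infty)$.

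The verification of coverage is the substance of the argument, and I would organize it along the three-part decomposition suggested by the figure. First, the four unit-speed agents visit the two endpoints $0$ and $7/2$ alternately; their covered region is a union of parallelogram-shaped bands in the space-time diagram, and I would write down exactly which triangular ``gaps'' in $[0,7/2]\times[1,\infty)$ they leave uncovered. Second, the agent of speed $7/3$ is shown to sweep across and cover almost all of the remaining region, leaving only a collection of small triangles; I would compute the vertices of these residual triangles explicitly. Third, I would check that the slow agent of speed $1/2$, following its prescribed back-and-forth motion, passes through each of those small triangles in time, thereby covering every remaining point. Because the strategy is periodic with period $7$, it is enough to carry out this coverage check over a single period, which reduces the verification to a finite computation.

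I expect the main obstacle to be the bookkeeping in the third step: pinning down the precise coordinates of the small uncovered triangles left by the speed-$7/3$ agent, and confirming that the speed-$1/2$ agent's trajectory intersects each required $(x;t^*)$ within the preceding unit-time window rather than merely passing near it. This is where an off-by-a-little error in the timing or slopes would break the argument, so I would state the critical incidences as a short list of explicit inequalities (comparing the slow agent's position to the triangle boundaries at the relevant times) and verify each one directly. The rest of the proof is routine once the trajectories are written down, and the strict inequality $41/12 < 7/2$ then immediately yields the theorem.
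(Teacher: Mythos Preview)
Your proposal is correct and follows essentially the same approach as the paper: the paper's proof consists precisely of exhibiting the six-agent example of Fig.~\ref{figure: six agents} (speeds $1,1,1,1,7/3,1/2$, fence length $7/2$, period $7$), describing the same three-layer coverage decomposition you outline, and noting that $41/12 < 7/2$. Your plan simply makes explicit the piecewise-linear trajectories and the finite coverage verification that the paper leaves to the figure.
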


Note that the above example for $k = 6$ agents 
easily implies the non-optimality of the partition-based strategy 
for each $k \geq 6$: 
we can, for example, modify the above strategy 
by extending the fence to the right 
and adding a seventh agent 
who is just fast enough to 
cover the extended part by moving back and forth. 

Another example 
involving more agents but perhaps simpler
is shown in Fig.~\ref{figure: nine agents}, 
where six agents with speed~$5$ and three with speed~$1$ 
patrol a fence of length $50 / 3$ 
using a periodic strategy, with period $10 / 3$.
Here, the six fast agents in the first diagram 
work in two groups of three in a synchronized way. 
The region covered by them is shown again in the second diagram in dotted lines, 
where the missed small triangular regions are covered by the three slow agents. 
The partition-based strategy would only achieve $33 / 2$. 

\section{Cases where the partition-based strategy is optimal}
\label{section: optimal}

Before proving the optimality of 
the partition-based strategy for 
three agents (Section~\ref{sec:k=3}), 
we brief\textcompwordmark ly discuss the much simpler cases of 
equal-speed agents (Section~\ref{section: one speed})
and two agents (Section~\ref{sec:k=2}). 

\subsection{Agents with equal speeds}
\label{section: one speed}

In the homogeneous setting where all agents have the same speed~$v$, 
it is relatively easy to prove that 
the partition-based strategy is optimal. 
This is true more generally when
there are regions that do not have to be visited frequently~%
\cite{Collins13}, 
as well as in related settings where 
the time and locations are discretized in a certain way~%
\cite[Section III]{pasqualetti_franchi_bullo_2010}. 
For the sake of completeness, we provide a short proof for our setting: 

\begin{theorem}
\label{thm:onespeed}
If all agents have the same speed, 
the partition-based strategy is optimal. 
\end{theorem}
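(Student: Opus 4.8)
The plan is to prove the equivalent statement that $k$ agents of common speed $v$ cannot patrol a fence longer than $kv/2$; after rescaling positions by $1/v$ we may assume $v = 1$ and aim to show $l \le k/2$. First I would exploit that equal-speed agents are \emph{indistinguishable}: the set $\{a_1(t), \dots, a_k(t)\}$ of occupied positions — and hence whether the fence is patrolled — is unchanged if we relabel agents whenever two of them meet so that they ``bounce'' rather than cross. Since the order statistics $a_{(1)}(t) \le \cdots \le a_{(k)}(t)$ of $1$-Lipschitz functions are again $1$-Lipschitz and cover the same positions at every time, this lets me assume without loss of generality that the agents never cross, i.e.\ $a_1(t) \le \cdots \le a_k(t)$ for all $t$. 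In this ordered picture only $a_1$ can visit the left endpoint $0$ and only $a_k$ the right endpoint $l$, so $a_1$ must touch $0$ and $a_k$ must touch $l$ within every unit-length window.

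The core of the argument is a lower bound on how \emph{stale} points become. For any $x$ and $t$, the most recent visit to $x$ before time $t$ is performed by whichever agent can reach its current position from $x$ in the least time, so the age of $x$ at time $t$ is at least $\mathrm{dist}(x, \{a_1(t), \dots, a_k(t)\})$ (recall $v = 1$). Patrolling forces this age to stay below~$1$, which already shows that the agents form a net of covering radius below~$1$ at every instant, and more refined versions bound the gap between consecutive ordered agents. To reach the tight constant I would combine this spatial information with the temporal obligation at the two endpoints: I plan to set up a telescoping inequality across the $k$ ordered agents, charging to each agent the portion of the fence for which it is solely responsible and using that two adjacent agents cannot simultaneously service both sides of the gap between them, so that each agent accounts for at most length $1/2 = v/2$.

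The hard part will be recovering exactly the factor $1/2$. Every elementary conservation argument I can write down loses it: integrating the number of visits to $x$ over all $x$ equals the total distance travelled, which is at most $kvT$ over a horizon~$T$, while patrolling forces only \emph{one} visit to each point per unit time, giving merely $l \le kv$; likewise the instantaneous-covering and covering-radius arguments give only $l \le kv$ or $l \le 2kv$. The reason is that these counts cannot distinguish the efficient ``never turn back'' coverage — which would indeed allow $l$ up to $kv$ — from what is actually feasible. The genuine content of the theorem is that confinement to $[0,l]$ forces every agent to reverse direction and hence to spend, in an amortized sense, half of its motion re-covering ground it has already covered; capturing this wasted half quantitatively, rather than merely qualitatively, is the crux. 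I expect the ordered/non-crossing reduction to be the key device for turning this qualitative fact into the exact constant, and as a fallback one can instead adapt the known arguments for the discretized variant~\cite[Section III]{pasqualetti_franchi_bullo_2010} or the more general setting of~\cite{Collins13}.
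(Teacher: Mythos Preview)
You have all the right pieces but are overlooking the one-line observation that finishes the proof.  You correctly reduce to the non-crossing picture $a_1(t)\le\cdots\le a_k(t)$ and correctly observe that then only $a_1$ can visit $0$, so $a_1$ must return to~$0$ within every unit-length time window.  What you do not write down is the immediate consequence: between two consecutive visits of $a_1$ to~$0$, at most one time unit elapses, so $a_1$ can stray at most distance $v/2$ from $0$ before having to turn back.  Hence $a_1$ is \emph{confined} to the interval $[0,v/2]$.  The rest of the fence $(v/2,l]$ is therefore never visited by $a_1$ and must be patrolled by $a_2,\dots,a_k$ alone; by induction on~$k$ this forces $l-v/2\le (k-1)v/2$, i.e.\ $l\le kv/2$.

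All the machinery you propose---instantaneous covering-radius bounds, telescoping charges across gaps, amortized ``wasted half'' accounting, discretization---is unnecessary, and as you yourself note, those global conservation arguments lose the factor~$1/2$.  The paper's proof gets the exact constant precisely by \emph{localizing}: it peels off the leftmost agent, shows it is trapped in a segment of length exactly $v/2$, and recurses.  Your plan to ``charge to each agent the portion of the fence for which it is solely responsible'' is morally the same idea, but you never state the confinement $a_1(t)\le v/2$, and without it the charge is not $v/2$ per agent.  Once you add that one sentence, the induction is immediate and the rest of your outline can be discarded.
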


\begin{proof}
We proceed by induction on the number $k$ of agents. 
We may assume that 
the agents never switch positions, 
so that $a _1 (t) \leq \dots \leq a _k (t)$ for all $t$. 
This is because 
two agents passing each other could as well just turn back. 
Under this assumption, 
the agent~$a _1$ must visit the point~$0$ once in every unit time, 
and hence is confined to the interval 
$[0, v / 2]$. 
The rest of the fence must be 
patrolled by the other $k - 1$ agents, 
who, by the induction hypothesis, 
cannot do better than the partition-based strategy
which patrols the length $(k - 1) v / 2$. 
Thus the total length is bounded by $v / 2 + (k - 1) v / 2 = k v / 2$. 
\end{proof}

\subsection{Two agents}
\label{sec:k=2}

Although the optimality of the partition-based strategy for two agents 
was already pointed out in \cite{esa2011},  
we present an alternative proof here. 
Some ideas in the proof will be used 
for three agents (Section~\ref{sec:k=3})
and also for the weighted setting (Section~\ref{section: final}). 

\begin{theorem}
\label{thm:k=2}
For two agents, 
the partition-based strategy is optimal. 
\end{theorem}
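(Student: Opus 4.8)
The plan is to show that two agents patrolling $[0,l]$ must, on average, pass through each interior level of the fence at least twice per unit of time, and to compare this with the fact that the two agents can travel a combined distance of at most $v_1+v_2$ per unit time. Putting these together yields $2l \le v_1+v_2$, i.e. $l \le (v_1+v_2)/2$, which is exactly the length achieved by the partition-based strategy.

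First I would reduce to the case in which the two agents never pass each other. Replacing $a_1,a_2$ by $b_1=\min(a_1,a_2)$ and $b_2=\max(a_1,a_2)$ changes neither the set $\{a_1(t),a_2(t)\}$ of occupied positions at any instant nor, therefore, which pairs $(x;t^*)$ are covered, and each $b_i$ is a valid trajectory (its speed never exceeds $\max(v_1,v_2)$). The point of the relabeling is purely structural: $b_1(t)\le b_2(t)$ for all $t$, so $b_1$ is the only agent that can ever be at $0$ and hence must be at $0$ once in every unit interval, and symmetrically $b_2$ must be at $l$ once in every unit interval. I keep the speed budget in its original coupled form: at each instant the two speeds are bounded by $v_1$ and $v_2$ in some order, so the combined distance travelled in time $T$ is at most $(v_1+v_2)T$. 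By the coarea (Banach-indicatrix) formula this combined distance equals $\int_0^l N(x)\,dx$, where $N(x)$ counts the total number of times either agent passes through level $x$; it therefore suffices to prove $\int_0^l N(x)\,dx \ge 2lT - O(1)$.

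The core estimate is a factor-of-two lower bound on $N(x)$ coming from the covering requirement. The elementary observation is that a single trajectory which lies at some level below $x$ once in every unit interval and at some level above $x$ once in every unit interval must cross $x$ at least twice in almost every unit interval, since its visits to the lower and upper levels interleave and each excursion above $x$ contributes one up-crossing and one down-crossing. I would apply this wherever coverage is forced onto one agent: for $x$ below $\inf_t b_2$, only $b_1$ can ever be at $x$, so $b_1$ is at $x$ every unit and at $0$ every unit, giving $N(x)=N_{b_1}(x)\ge 2(T-1)$; symmetrically $N(x)\ge 2(T-1)$ for $x$ above $\sup_t b_1$. Combined with the endpoint constraints and with the no-gap condition $\max_{[t^*-1,t^*]}b_1 \ge \min_{[t^*-1,t^*]}b_2$ (which must hold so that the two covered intervals meet), this already handles every level outside the overlap $[\inf_t b_2,\ \sup_t b_1]$.

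The main obstacle is the overlap region, where \emph{both} agents can reach $x$ and can share the task of covering it. Naive counting fails there: if the agents hand the coverage of $x$ back and forth every unit, each contributes only about one crossing per unit and the factor-of-two collapses to one -- and this is precisely the kind of cooperation that makes the six-agent counterexample of Section~\ref{sec:k=6} possible. The crux of the two-agent proof is therefore to show that such hand-offs cannot save distance when there are only two agents: whenever $b_2$ descends from the right endpoint to cover an interior level $x$ and then returns to service $l$, it pays a round trip of length about $2(l-x)$, and symmetrically for $b_1$, so that charging these round trips makes every hand-off at least as expensive as the crossings it was meant to replace. I expect this bookkeeping -- making precise that even the overlap contributes its full factor of two to $\int_0^l N(x)\,dx$ -- to be the delicate step, after which $2l\le v_1+v_2$ follows and the optimality of the partition-based strategy is established.
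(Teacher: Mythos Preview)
Your approach is genuinely different from the paper's.  The paper gives a short direct contradiction: assuming $l>(v_1+v_2)/2$, it locates a single pair $(x;t^*)$ that neither agent can cover.  Concretely, with $l_i=v_i l/(v_1+v_2)$ it first argues each agent must visit an endpoint eventually; then, fixing a time $t_2$ when the slower agent $a_2$ is at $0$, it notes $(l_2;t_2+\tfrac12)$ forces $a_1$ to be at $l_2$ at some $t_1\in[t_2-\tfrac12,t_2+\tfrac12)$, whence $(l;t_1+\tfrac12)$ is uncovered by both.  No averaging, no coarea, no asymptotics in $T$.

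Your route---sort to $b_1\le b_2$, use that $b_1$ must hit $0$ and $b_2$ must hit $l$ every unit, and bound the combined variation from below by $2lT-O(1)$---is sound in outline and is essentially a sharpening of the trivial $l\le v_1+\cdots+v_k$ bound mentioned in Section~\ref{section: final}.  The crux, as you correctly flag, is the overlap region: your ``round-trip charging'' idea (each excursion of $b_1$ up to level $x$ costs $2x$ since it must return to $0$, and symmetrically for $b_2$) is exactly the right mechanism, and together with the sliding-window constraint $\max_{[t,t+1]}b_1\ge\min_{[t,t+1]}b_2$ it does yield $V_1+V_2\ge 2lT-O(1)$.  But you have not actually carried out that bookkeeping---the asynchrony between $b_1$'s visits to $0$ and $b_2$'s visits to $l$ makes the per-unit accounting nontrivial---so as written the proposal is a plausible sketch rather than a proof.

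What each buys: the paper's argument is much shorter and, crucially, its style (pin down one agent, force another, exhibit an uncovered pair) is exactly what gets reused in the three-agent proof (Lemmas~\ref{clm:02}--\ref{clm:07}).  Your averaging argument would not directly supply those lemmas.  On the other hand, your approach is more quantitative and robust---it would, for instance, immediately give the weighted two-agent case with $l_i=v_iT_i l/(v_1T_1+v_2T_2)$ and might be a better starting point for approximation ratios.
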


\begin{proof}
Suppose that 
this was false.  That is, 
suppose that there is a strategy where
agents $a_1$ and $a_2$ 
patrol $[0, l]$ for some
$l > (v _1 + v _2) / 2$. 
We may assume that $v _1 \geq v _2$. 
Let
$l _i = v _i l / (v _1 + v _2)$ for 
$i = 1$, $2$. 
Note that $l = l_1 + l_2$, and 
that it takes time longer than $1 / 2$ 
for agent $a _i$ to travel the distance~$l _i$. 

For any time $t \geq 0$, 
each agent must visit an endpoint ($0$ or $l$)
some time after $t$. 
To see this, let $t _0 > t$ be a time 
at which the endpoint~$0$ is visited. 
Then $(l; t _0 + 1 / 2)$ cannot be covered by this same agent, 
and thus is covered by the other agent. 

Hence, the slower agent $a _2$ 
visits an endpoint, say $0$, at some time $t _2 > 1$.
This implies that $(l _2; t _2 + 1 / 2)$ 
cannot be covered by $a _2$. 
It must therefore be covered by $a _1$, that is, 
$a _1$ must visit $l _2$ at some time $
t _1 \in [t _2 - 1 / 2, t _2 + 1 / 2)
$.  This implies that 
$(l; t_1 + 1 / 2)$ is not covered by $a _1$. 
But it is not covered by $a _2$ either, because $
t _1 + 1 / 2 \in [t _2, t _2 + 1)
$ and the agent $a _2$ cannot travel the distance $l _1 + l _2$ in unit time
(see Fig.~\ref{fig:00}). 
This is a contradiction. 
\end{proof}

\begin{figure}
\begin{center}
\includegraphics[scale=1.0]{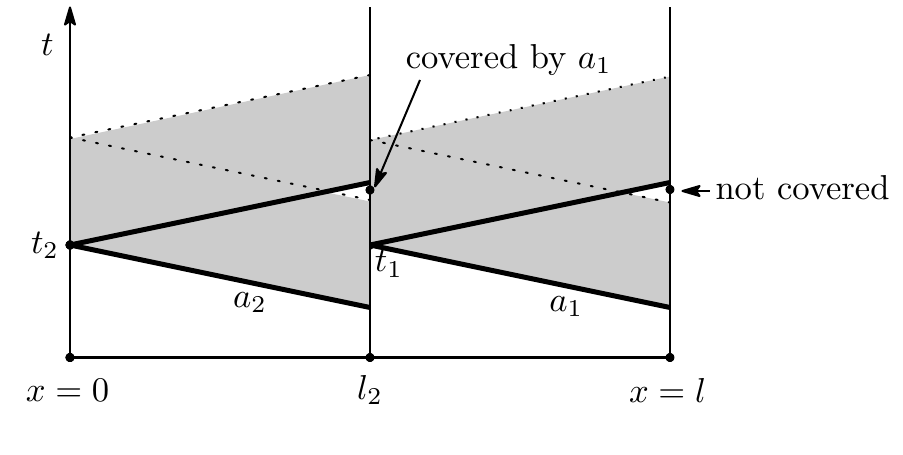}
\caption{Proof of Theorem~\ref{thm:k=2}}
\label{fig:00}
\end{center}
\end{figure}

\subsection{Three agents}
\label{sec:k=3}

In this section, 
we show that 
Czyzowicz et al.'s conjecture is true for three agents: 

\begin{theorem}
\label{thm:k=3}
For three agents, 
the partition-based strategy is optimal. 
\end{theorem}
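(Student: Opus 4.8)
The plan is to argue by contradiction, generalizing the endpoint-chasing argument of Theorem~\ref{thm:k=2}. Normalize so that $v_1 \ge v_2 \ge v_3$, write $V = v_1 + v_2 + v_3$, and suppose some strategy lets $a_1$, $a_2$, $a_3$ patrol $[0,l]$ with $l > V/2$. First I would record the one easy bound and explain why it falls short. At any query time $t^*$, the set of points covered by $a_i$ is, by the intermediate value theorem, the interval between the minimum and maximum of $a_i$ on $[t^*-1,t^*)$, and this interval has length at most $v_i$; since the three intervals must cover $[0,l]$, we get $l \le v_1+v_2+v_3 = 2\cdot(V/2)$. This static snapshot bound loses exactly the factor of two, because it ignores that an agent sweeping a sub-segment must also \emph{return} to re-cover it. Capturing this return cost forces us to follow the trajectories through time, just as in the two-agent proof.

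The tool I would isolate first is a half-unit reach lemma. Set $l_i = v_i\,l/V$, so that $l_1+l_2+l_3 = l$ and $l_i/v_i = l/V > 1/2$ for every $i$. The lemma: if $a_i(t) = x$, then $a_i$ does not cover $(y;\,t+1/2)$ whenever $|x-y| > l_i$, since reaching $y$ from $x$ takes time at least $|x-y|/v_i > l_i/v_i > 1/2$, placing $a_i$ outside the window $[t-1/2,\,t+1/2)$ at $y$. This turns ``$a_i$ is here now'' into ``$a_i$ cannot be responsible for a far point half a unit later,'' and it is the three-agent analogue of the single computation driving the $k=2$ proof.

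With the lemma in hand I would run a marching argument. Since $0$ must be covered at every query time after $1$, some agent sits at $0$ at arbitrarily late times; I would anchor the march at such an event at a time $t_0 > 1$. By the reach lemma, $(l_{\mathrm{anchor}};\,t_0+1/2)$ is uncovered by that agent, so another agent must be near that frontier point at a time within $1/2$ of $t_0$; applying the lemma again advances the frontier rightward by that second agent's $l_i$, committing the third agent near the next frontier. Because the three jumps sum to $l_1+l_2+l_3 = l$ regardless of their order, a chain that uses each agent once reaches the far endpoint $l$ and places there obligations that no agent can meet, contradicting patrolling --- the same collision of two obligations at $l$ that closed the $k=2$ case.

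I expect the case analysis to be the main obstacle, and there are two genuine difficulties that the $k=2$ argument sidesteps. First, the chain does not propagate deterministically: at each step either of the two not-yet-committed agents could be covering the current frontier, and their times and directions can interleave in several ways, so one must bookkeep which agent is committed where and rule out a single agent being charged with two far-apart obligations within one unit (this is where $l/v_i > 1/2$ is used repeatedly). Second, and more delicately, the \emph{closure} requires that the final far-endpoint obligation fall on a \emph{slow} agent: only $l > V/2 \ge v_2 \ge v_3$ guarantees $l/v_2, l/v_3 > 1$, so that $a_2$ or $a_3$ cannot traverse the whole fence in unit time, whereas the fastest agent may well have $l/v_1 < 1$. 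The argument must therefore be steered --- by choosing the anchor and exploiting left-right symmetry about $0$ and $l$ --- so that the terminal obligation lands on $a_2$ or $a_3$. Organizing the finitely many interleavings of the three trajectories and checking that each yields an uncovered point-time pair, rather than any single hard inequality, is where I expect the bulk of the work to lie.
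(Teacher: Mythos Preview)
Your high-level plan coincides with the paper's: argue by contradiction, set $l_i=v_il/V$ so that $l_i/v_i>1/2$, anchor at an endpoint visit, and chain ``this agent cannot cover that point half a unit later'' across the fence. The reach lemma you isolate is exactly the tool the paper uses repeatedly. But the three-step march you describe does not close, and the gap is not merely a matter of bookkeeping the interleavings.

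The issue is time drift. In the $k=2$ proof the final query window sits within one time unit of the anchor, and $l/v_2>1$ suffices. With three steps the final window $[s_2-\tfrac12,s_2+\tfrac12)$ lies only within $[s_0-\tfrac32,s_0+\tfrac32)$, so you would need $l/v_i>3/2$ for the anchor agent~$a_i$; this fails, for instance, when $i=2$ and $v_1+v_3<2v_2$. Independently, if the \emph{middle} agent of the chain happens to be $a_1$, then from position $l_i$ it has up to a full time unit to reach $l$, and $(l_2+l_3)/v_1>1$ is generally false. You note that the terminal obligation must be steered onto a slow agent, but you do not control which agent anchors or which agent picks up each intermediate obligation, so the steering cannot be done by symmetry alone. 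The paper obtains that control through two ingredients your sketch lacks. First, a structural lemma (Lemma~\ref{clm:04}): whenever $a_2$ and $a_3$ both visit the same endpoint, $a_1$ must visit it in between; together with Lemmas~\ref{clm:02}--\ref{clm:03} this forces the specific anchor configuration $a_1(t_1)=a_2(t_2)=0$ with $t_1\le t_2\le t_1+1$. Second, a case split on $v_1\gtrless 2v_2+v_3$: when $v_1\ge 2v_2+v_3$ the contradiction is reached via the non-march query point $l_1+l_2-l_3$, and when $v_1\le 2v_2+v_3$ four further lemmas pin $a_3$ to a narrow window near $l_1+l_2$, then force $a_2$ and $a_1$ into positions that leave a pair $(0;\cdot)$ uncovered. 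The speed-ratio case split and Lemma~\ref{clm:04} are the missing ideas.
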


For a contradiction, suppose that 
agents $a_1$, $a_2$, $a_3$ with speeds $v_1 \geq v_2 \geq v_3$
patrol $[0, l]$, where 
$l > (v_1 + v_2 + v_3) / 2$. 
For $i=1, 2, 3$ let $l_i = v_i l / (v_1 + v_2 + v_3)$, 
so that $l = l_1 + l_2 + l_3$ and 
$l _i > v _i / 2$. 
We start with some lemmas
about the coverage of endpoints. 

\begin{lemma}\label{clm:02}
For any $t^* \geq 0$, 
at least two different agents visit $0$ after the time $t^*$, and 
at least two different agents visit $l$ after the time $t^*$. 
\end{lemma}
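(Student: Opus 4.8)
The plan is to argue by contradiction for the statement about the endpoint $0$; the statement about $l$ then follows by the reflection $x \mapsto l - x$. So suppose that after some time $t^*$ at most one agent ever visits $0$. Since $(0;t)$ must be covered for every $t > t^* + 1$, and covering it requires some agent to be at $0$ during $[t-1,t) \subseteq (t^*,\infty)$, at least one agent visits $0$ after $t^*$; hence exactly one does, call it $a$ (with speed $v_a$), while the other two agents $b, c$ never touch $0$ after $t^*$. I would then show that $a$ is pinned near $0$, forcing $b, c$ to patrol a fence too long for two agents, which contradicts Theorem~\ref{thm:k=2}.

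The heart of the argument is the confinement claim: for every $\tau > t^* + 1$ one has $a(\tau) \le v_a / 2$. Since $a$ is the only agent covering $0$, it must be at $0$ at some moment of each window $[t-1,t)$ with $t > t^*+1$. First I would observe that \emph{consecutive} visits of $a$ to $0$ are at most one time unit apart: a visit at time $s$ keeps $0$ covered only up to $s+1$, so if the next visit were later than $s+1$, some $(0;t)$ with $t$ just above $s+1$ would be left uncovered. Fixing $\tau$ and taking the last visit $s_1 \le \tau$ and the first visit $s_2 \ge \tau$, these two are consecutive, so $s_2 - s_1 \le 1$, and the round trip $0 \to a(\tau) \to 0$ over $[s_1,s_2]$ gives $2\,a(\tau) \le v_a (s_2 - s_1) \le v_a$.

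The main obstacle I anticipate is obtaining the \emph{sharp} bound $v_a/2$ rather than the naive $v_a$. A careless estimate that surrounds $\tau$ by an arbitrary visit in $[\tau-1,\tau)$ and one in $[\tau,\tau+1)$ only yields $a(\tau) \le v_a$, which is useless, since the reduction needs precisely the factor $1/2$. The fix is to insist on consecutive visits together with the unit-gap bound above, so that each excursion away from $0$ is a genuine there-and-back trip of duration at most one. This is the single place where the exact length ($=1$) of the coverage window is used in full, and getting the constant right here is what makes the whole lemma work.

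Finally, the reduction. Once $a$ is confined to $[0, v_a/2]$ at all sufficiently large times, every point $x > v_a/2$ can, for large $t$, be covered only by $b$ or $c$. Hence $b$ and $c$ alone patrol the subsegment $[v_a/2, l]$, whose length is $l - v_a/2 > (v_1+v_2+v_3)/2 - v_a/2 = (v_b + v_c)/2$ by the hypothesis on $l$. Shrinking to a closed segment $[v_a/2 + \eta, l]$ with $\eta$ small enough to keep the length above $(v_b+v_c)/2$, and translating it to start at $0$, exhibits two agents of speeds $v_b, v_c$ patrolling a fence longer than $(v_b+v_c)/2$, contradicting Theorem~\ref{thm:k=2}. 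The same argument reflected across the midpoint of $[0,l]$ handles the endpoint $l$.
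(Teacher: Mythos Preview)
Your proof is correct and follows the same route as the paper's: assume a single agent $a_i$ monopolises the endpoint $0$ after $t^*$, deduce that it is confined near $0$, and conclude that the remaining two agents would have to patrol a segment of length exceeding $(v_j+v_k)/2$, contradicting Theorem~\ref{thm:k=2}. The only cosmetic difference is that the paper states the confinement with the looser cutoff $l_i$ (recall $l_i > v_i/2$) and is terse about why it holds, whereas you work out the sharper $v_a/2$ bound via the consecutive-visits argument.
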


\begin{proof}
Let $\{i, j, k\} = \{1, 2, 3\}$, and
assume that $a_i$ is the only agent that visits $0$ after time $t^*$. 
This forces it to stay 
(after time $t ^* + 1 / 2$) 
in the part $[0, l _i]$, 
so the remaining part $[l_i, l]$ of length $l_j + l_k$ has to be patrolled by 
$a_j$ and $a_k$, 
contradicting Theorem~\ref{thm:k=2}. 
The same argument applies to the other endpoint $l$. 
\end{proof}

\begin{lemma}\label{clm:03}
For any $t^* \geq 0$, 
each agent visits at least one of $0$ and $l$ after the time $t^*$. 
\begin{figure}
\begin{center}
\includegraphics[scale=1.0]{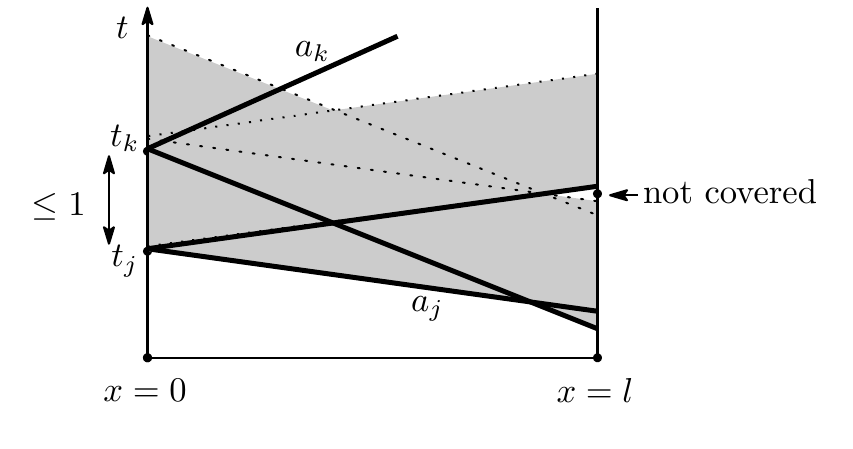}
\caption{Proof of Lemma~\ref{clm:03}}
\label{fig:01}
\end{center}
\end{figure}
\end{lemma}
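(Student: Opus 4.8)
The plan is to argue by contradiction. Suppose some agent — call it $a_i$, and denote the other two by $a_j$ and $a_k$ — visits neither $0$ nor $l$ after time $t^*$. The first step is to pin down who covers the endpoints. Applying Lemma~\ref{clm:02} to both endpoints for this same $t^*$, and using that $a_i$ contributes nothing at either endpoint, I would conclude that $a_j$ and $a_k$ are each forced to visit both $0$ and $l$; and since Lemma~\ref{clm:02} holds for every threshold, both of them must in fact visit each endpoint at arbitrarily large times. So $a_j$ and $a_k$ are genuine full-range oscillators, and the exact points $0$ and $l$ are kept covered by $a_j$ and $a_k$ alone, with no help from $a_i$.

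The core idea is then a reduction to the two-agent bound (Theorem~\ref{thm:k=2}). Morally, $a_i$ is a single agent of speed $v_i$, so it can be solely responsible for at most a sub-interval of length $v_i/2$; hence a portion of length more than $l - v_i/2$ of the fence, including both endpoints, must be kept covered by $a_j$ and $a_k$. Since $l > (v_1+v_2+v_3)/2$, we have $l - v_i/2 > (v_j+v_k)/2$, so $a_j$ and $a_k$ would effectively have to patrol a fence longer than $(v_j+v_k)/2$ by themselves, which Theorem~\ref{thm:k=2} forbids. To make this quantitative I would reuse the window mechanism from the proof of Theorem~\ref{thm:k=2}: an agent located at a point $p$ at time $\tau$ cannot cover, at time $\tau+1/2$, any point farther than (its speed)$/2$ from $p$, because throughout the half-open window $[\tau-1/2,\tau+1/2)$ it stays within that distance of $p$; here the strict inequalities $l_m>v_m/2$ are what give a contradiction rather than a boundary case. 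The intended outcome is a point–time $(x;t)$ that none of the three agents covers: $a_i$ is excluded whenever $x$ is an exact endpoint, while $a_j$ and $a_k$ are excluded by the reach bound once a short chain of such window arguments, anchored at a large time when one of them sits at an endpoint, has forced them too far away.

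The step I expect to be the main obstacle is turning the clean ``$a_i$ covers at most $v_i/2$, hence $a_j,a_k$ patrol more than $(v_j+v_k)/2$'' intuition into a rigorous contradiction, because $a_i$ is free to roam and may approach the endpoints arbitrarily closely without ever touching them; the region left for $a_j$ and $a_k$ is therefore time-varying and need not be a single sub-fence to which Theorem~\ref{thm:k=2} applies verbatim. To handle this I would split on whether $a_i$ stays bounded away from an endpoint: writing $r=\inf_{t\ge t^*+1}a_i(t)$ and $d=\sup_{t\ge t^*+1}a_i(t)$, the case $r>0$ (and symmetrically $d<l$) isolates a genuine sub-fence adjacent to an endpoint that $a_j$ and $a_k$ must patrol on their own, giving the contradiction directly. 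In the remaining case $a_i$ sweeps essentially all of $(0,l)$, and I would instead exploit that within any unit-length window $a_i$ traverses an interval of length at most $v_i$, so it cannot be responsible for points near both endpoints at once, again reducing to a two-agent obstruction. Gluing these cases — ensuring that the forced positions of $a_j$ and $a_k$ and the exclusion of $a_i$ all line up at one common time and place — is where the careful bookkeeping, presumably illustrated by Fig.~\ref{fig:01}, is needed.
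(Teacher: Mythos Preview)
Your reduction to Theorem~\ref{thm:k=2} has a real gap, and your proposed case split does not close it. In the case $r=\inf_{t\ge t^*+1}a_i(t)>0$, the sub-fence $[0,r]$ that $a_i$ never enters may be arbitrarily short, so there is no reason to have $r>(v_j+v_k)/2$; Theorem~\ref{thm:k=2} then says nothing. The symmetric case $d<l$ has the same defect, and combining the two does not help because $[0,r]\cup[d,l]$ is not an interval. Your fallback observation that in any unit window $a_i$ occupies an interval of length at most $v_i$ is correct, but it still leaves $a_j$ and $a_k$ with a moving, non-interval complement rather than a fixed sub-fence to which the two-agent bound applies. The proposal effectively stops at ``careful bookkeeping is needed'' without supplying the argument, and I do not see how to complete it along these lines.

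The paper's proof avoids all of this by never trying to localize $a_i$. It assumes only that $a_i$ does not visit $0$ after $t^*$; then by Lemma~\ref{clm:02} both $a_j$ and $a_k$ must visit $0$ infinitely often, so one can choose $t_j\le t_k\le t_j+1$ with $a_j(t_j)=a_k(t_k)=0$. A single window computation now shows that neither $a_j$ nor $a_k$ covers $(l;\,t_j+l/v_j)$: for $a_j$ this is just $2l/v_j>1$, and for $a_k$ one uses $l>(v_j+v_k)/2$ together with $\tfrac{v_k}{v_j}+\tfrac{v_j}{v_k}\ge 2$ to get
\[
\Bigl(t_j+\tfrac{l}{v_j}\Bigr)-\Bigl(t_k-\tfrac{l}{v_k}\Bigr)>(t_j-t_k)+1+\tfrac12\Bigl(\tfrac{v_k}{v_j}+\tfrac{v_j}{v_k}\Bigr)\ge 1.
\]
Hence $a_i$ must cover this pair, so $a_i$ visits $l$ after $t^*$, which is exactly the lemma. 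No sub-fence extraction, no infimum/supremum, no case analysis on the range of $a_i$.
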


\begin{proof}
Let $\{i, j, k\} = \{1, 2, 3\}$, and
assume that $a_i$ does not visit $0$ after $t^*$. 
By Lemma~\ref{clm:02}, 
both $a_j$ and $a_k$ visit $0$ infinitely often after $t^*$. 
Thus, 
$a_j(t_j) = a_k(t_k) =0$ for some 
$t_j, t_k > t^* + 1 / 2$ with 
$t_j \leq t_k \leq t_j + 1$ 
(see Fig.~\ref{fig:01}). 
The pair $(l; t_j + l / v _j)$ is not covered by $a_j$, because
  $(t _j + l / v _j) - (t _j - l / v _j) > 1$. 
It is not covered by $a _k$ either, because
\begin{equation*}
 \biggl( t_j + \frac{l}{v_j} \biggr) - \biggl( t_k - \frac{l}{v_k} \biggr) 
>
 t _j + \frac{v _j + v _k}{2 v _j} - t _k + \frac{v _j + v _k}{2 v _k}
=
 (t _j - t _k) + 1 + \frac 1 2 \biggl( \frac{v_k}{v_j} + \frac{v_j}{v_k} \biggr) 
\geq
 1. 
\end{equation*}
Hence, it must be covered by $a_i$, 
which means that $a_i$ visits $l$ after the time $t^*$. 
\end{proof}

\begin{lemma}\label{clm:04}
Suppose that $a_2(t_2) = a_3(t_3) = 0$ (resp.~$= l$) 
for some $t_2, t_3 > 1$. 
Then, 
\begin{itemize}
\item
$a_1(t_1) = 0$ (resp.~$= l$)
for some $t _1 \in (t _2, t _3)$ if $t_2 \leq t_3$, and 
\item
$a_1(t_1) = 0$ (resp.~$= l$)
for some $t _1 \in (t _3, t _2)$ if $t_2 \geq t_3$.
\end{itemize}
\end{lemma}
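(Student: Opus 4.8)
By the reflection symmetry of the fence it suffices to treat the endpoint $0$ (the case ``$=l$'' being identical), and by relabelling we may assume $t_2 \le t_3$, so the target is a visit at some $t_1 \in (t_2, t_3)$. Suppose, for contradiction, that $a_1(t) \ne 0$ for every $t \in (t_2, t_3)$; since $a_1$ is continuous and takes values in $[0,l]$, this means $a_1 > 0$ on the whole open interval. The plan is then to exhibit a point--time pair $(x;\tau)$ that no agent covers, contradicting that the three agents patrol $[0,l]$.

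The method is the round-trip ``relay'' already used in Theorem~\ref{thm:k=2} and Lemma~\ref{clm:03}: from knowing an agent is at an endpoint at a given moment, one bounds when it can next reach a prescribed interior point or the opposite endpoint, and chases the coverage obligation from one agent to the next. Concretely, from $a_2(t_2)=0$ and $a_3(t_3)=0$ I get that $a_2$ (resp.\ $a_3$) cannot be at the far endpoint $l$ within time $l/v_2$ (resp.\ $l/v_3$) of $t_2$ (resp.\ $t_3$); note $l/v_i > 1/2$ since $l_i > v_i/2$. The first step is to locate a time $\tau^*$ at which neither slow agent can cover $(l;\tau^*)$, which forces $a_1$ to the far endpoint; combined with the hypothesis $a_1 > 0$ on $(t_2,t_3)$ and the speed bound on $a_1$, this constrains $a_1$'s whereabouts enough to expose an uncovered point near $0$.

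The hard part is cornering $a_1$, which---being the fastest---can in principle reach either endpoint, so unlike in Theorem~\ref{thm:k=2} a single relay does not immediately close. I expect to split on the size of $t_3 - t_2$ relative to $l/v_2 + l/v_3$: when the two ``forbidden'' windows for $l$ overlap, $a_1$ is pinned near $l$ long enough that the remaining sub-fence $[0,\, l - v_1/2]$---of length $l - v_1/2 > (v_2+v_3)/2$, precisely because $l > (v_1+v_2+v_3)/2$---would have to be patrolled by $a_2$ and $a_3$ alone, contradicting Theorem~\ref{thm:k=2}; when the gap is large, the coverage of $0$ on $(t_2,t_3)$ falls entirely on $a_2,a_3$ (since $a_1 > 0$ there), and I would relay from this to an uncovered pair. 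The main technical obstacles are localising the asymptotic two-agent bound to a bounded time interval and verifying that the forced visit time indeed lands strictly inside $(t_2,t_3)$ rather than merely near it.
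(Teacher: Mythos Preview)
Your proposal has a genuine gap at exactly the place you flag as a ``technical obstacle'': invoking Theorem~\ref{thm:k=2} on a bounded time window. Theorem~\ref{thm:k=2} is an asymptotic statement (no two agents can patrol a too-long fence for all $t\ge 1$), and nothing in its proof localises to a finite interval. Saying that $a_1$ is ``pinned near $l$'' for a while and hence $a_2,a_3$ must handle $[0,\,l-v_1/2]$ during that while does not contradict Theorem~\ref{thm:k=2}; two slow agents can easily cover a long segment for a short time. So the ``overlapping windows'' branch of your case split does not close, and the ``large gap'' branch is left entirely unspecified.

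The paper's argument avoids this altogether via two ideas you are missing. First, it reduces to $t_3-t_2\le 1$: on $(t_2,t_3)$ the endpoint $0$ must still be visited once per unit time, and since by hypothesis $a_1$ never visits $0$ there, these visits are by $a_2$ or $a_3$; retaking $t_2,t_3$ to be an adjacent pair of such visits gives $t_3-t_2\le 1$. Second---and this is the key---the first relay target is not the far endpoint $l$ but the interior point $l_2+l_3$. With $t_3-t_2\le 1$, the same computation as in Lemma~\ref{clm:03} shows that neither $a_2$ nor $a_3$ covers $\bigl(l_2+l_3;\,t_2+(l_2+l_3)/v_2\bigr)$, so $a_1$ is forced to be at $l_2+l_3$ at some time $t_1$ in a specific window. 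One more relay step then shows that $(l;\,t_1+l_1/v_1)$ is covered by no one (here one finally uses $v_1\ge v_2\ge v_3$). The choice of $l_2+l_3$ rather than $l$ is what lets the two-step relay close without any appeal to Theorem~\ref{thm:k=2}.

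A minor remark: your ``relabelling'' to assume $t_2\le t_3$ is not literally a relabelling, since $a_2$ and $a_3$ carry distinct speeds; but the actual argument is symmetric in the roles of $2$ and $3$ (the paper simply says the case $t_2\ge t_3$ is similar), so this is harmless once the correct argument is in place.
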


\begin{proof}
Assume that there are 
$t_3 \geq t_2 > 1$ 
such that 
$a _2 (t _2) = a _3 (t _3) = 0$ 
and $a _1 (t _1) \neq 0$ for any $t _1 \in (t _2, t _3)$. 
We may then retake $t_2$ and $t_3$, if necessary, 
and have $t_3 - t_2 \leq 1$ (see Fig.~\ref{fig:02}). 
\begin{figure}
\begin{center}
\includegraphics[scale=1.0]{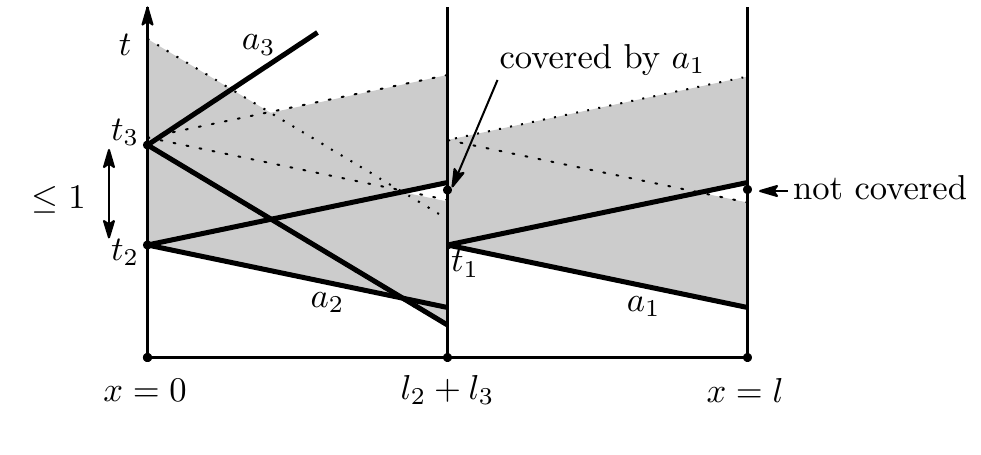}
\caption{Proof of Lemma~\ref{clm:04}}
\label{fig:02}
\end{center}
\end{figure}
By the same argument as the proof of Lemma~\ref{clm:03}, 
the pair $(l_2+l_3; t_2 + (l_2 + l_3) / v_2)$ is covered by neither $a_2$ nor $a_3$. 
More precisely, it is not covered by $a_2$, 
because 
$\bigl( t_2 + (l_2 + l_3) / v_2 \bigr) - \bigl( t_2 - (l_2 + l_3) / v_2 \bigr) > 1$, and
it is not covered by $a_3$ either, 
because 
\begin{equation*}
   \biggl( t_2 + \frac{l_2 + l_3}{v_2} \biggr) 
  -
   \biggl( t_3 - \frac{l_2 + l_3}{v_3} \biggr) 
 > 
  t _2 + \frac{v _2 + v _3}{2 v _2} - t_3 + \frac{v _2 + v _3}{2 v_3}
 =
  (t_2- t_3) + 1 + \frac 1 2 \biggl( \frac{v_3}{v_2} + \frac{v_2}{v_3} \biggr) 
 \geq
  1. 
\end{equation*}
Hence, it must be covered by $a_1$, 
which means that 
$a_1(t_1) = l_2 + l_3$ for some 
$t _1 \in [t _2 + (l _2 + l _3) / v _2 - 1, t _2 + (l _2 + l _3) / v _2)$. 
Since $v_1 \geq v_2 \geq v_3$, $(l; t_1 + l_1 / v_1)$ is covered by none of $a_1, a_2$, and $a_3$, 
which is a contradiction. 

The argument is similar when $t_2 \geq t_3$ and 
when $a_2(t_2) = a_3(t_3) = l$. 
\end{proof}

By Lemmas~\ref{clm:03} and~\ref{clm:04}, 
it happens infinitely often that 
one of the endpoints is visited by $a _1$ and 
then immediately by $a _2$. 
Let us focus on one occurrence of this event, sufficiently later in time 
(time $1 + l / v_3$ is enough), 
which, without loss of generality, happens at the endpoint~$0$. 
That is, 
we fix $t _1$ and $t _2$ with 
$1 + l / v_3 < t_1 \leq t_2 \leq t_1 + 1$ such that 
$a _1 (t _1) = a _2 (t _2) = 0$ and 
no agent visits $0$ during the time interval $(t _1, t _2)$. 
Note that we choose $1 + l / v_3$ so that
every value of time appearing in the proof is at least $1$. 
Now we split into two cases. 

\subsubsection*{Case I: $v_1 \geq 2v_2+v_3$}
\label{sec:k=3case1}

The pair $(l_1 + l_2 - l_3; t_1 + \frac{l_1 + l_2 - l_3}{v_1})$ 
is not covered by $a_1$, 
because $
  \bigl( t_1 + \frac{l_1 + l_2 - l_3}{v_1} \bigr) 
 - 
  \bigl( t_1 - \frac{l_1 + l_2 - l_3}{v_1} \bigr) 
=
 2 \cdot \frac{l_1 + l_2 - l_3}{v_1}
\geq
 \frac{2 l_1}{v_1}
>
 1
$.  It is not covered by $a_2$ either, because 
\begin{align*}
  \biggl( t _1 + \frac{l _1 + l _2 - l _3}{v _1} \biggr) 
 - 
  \biggl( t _2 - \frac{l _1 + l _2 - l _3}{v _2} \biggr) 
&
>
 t _1 + \frac{v _1}{2 v _1} - t _2 + \frac{v _1 + v _2 - v _3}{2 v _2} 
\\
&
=
 (t _1 - t _2) + 1 + \frac{v _1 - v _3}{2 v _2} 
\geq 
 1. 
\end{align*}
Hence, 
it must be covered by $a_3$ (Fig.~\ref{fig:03}), 
\begin{figure}
\begin{center}
\includegraphics[scale=1.0]{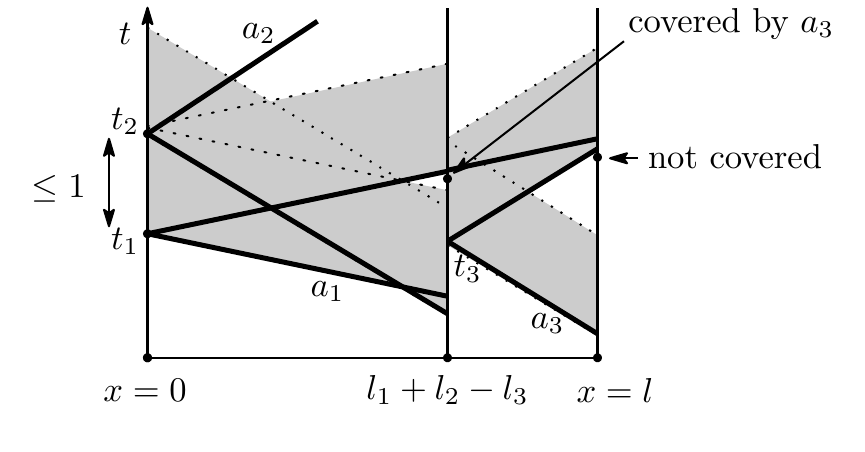}
\caption{Case of $v_1 \geq 2v_2+v_3$}
\label{fig:03}
\end{center}
\end{figure}
which means that $a_3(t_3) = l_1 + l_2 - l_3$
for some $t_3 \in [t_1 + \frac{l_1 + l_2 - l_3}{v_1}-1, t_1 + \frac{l_1 + l_2 - l_3}{v_1})$.  

If $t_3 + \frac{2l_3}{v_3} \leq t_1 + \frac{l}{v_1}$, then 
$(l; t_3 + \frac{2l_3}{v_3})$ is not covered by any of $a_1$, $a_2$, $a_3$
(see Fig.~\ref{fig:03}). 
Otherwise, 
$(l; t_1 + \frac{l}{v_1})$ is not covered by any of $a_1$, $a_2$, $a_3$ 
(not by $a _3$ because $
 (t_1 + \frac{l}{v_1}) - (t_3 - \frac{2l_3}{v_3}) 
>
 t _3 - (t _3 - 1) 
=
 1
$). 

\subsubsection*{Case II: $v_1 \leq 2v_2+v_3$}
\label{sec:k=3case2}

This is the harder case and takes up the rest of this section. 
Again, let $t_1$ and $t_2$ be such that
$1 + \frac{l}{v_3} < t_1 \leq t_2 \leq t_1 + 1$ and 
$a_1(t_1) = a_2(t_2) = 0$. 

\begin{lemma}\label{clm:10}
$a_2(t) \neq l$ for any $t \in [t_1 - \frac{l}{v_1}, t_1 + \frac{l}{v_1}]$. 
\end{lemma}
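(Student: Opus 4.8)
The plan is to argue by contradiction: suppose $a_2(\tau) = l$ for some $\tau \in [t_1 - l/v_1,\, t_1 + l/v_1]$, and produce a space--time pair that no agent covers. The first observation is that the window in the statement is exactly the interval on which $a_1$ is useless at the right endpoint. Since $a_1(t_1) = 0$, the agent $a_1$ can be at $l$ only at times $s$ with $|s - t_1| \ge l/v_1$, so $a_1$ fails to cover $(l; t^*)$ for every $t^*$ with $t_1 - l/v_1 + 1 < t^* \le t_1 + l/v_1$; this interval is nonempty because $l/v_1 > 1/2$ (as $l_i > v_i/2$ gives $l/v_1 \ge l/(v_1+v_2+v_3) > 1/2$). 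I would record this ``blind interval'' of $a_1$ at the outset.

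Next I would exploit that $a_2$ is pinned at $0$ at the nearby time $t_2$, with $t_1 \le t_2 \le t_1 + 1$. If $\tau \ge t_2$, then travelling from $0$ to $l$ forces $\tau - t_2 \ge l/v_2$, while the window together with $t_2 \ge t_1$ gives $\tau - t_2 \le l/v_1 \le l/v_2$; hence $v_1 = v_2$ with everything tight, a degenerate situation I would dispose of separately. So the main case is $\tau < t_2$, where $a_2$ makes a full right-to-left crossing. Taking $\tau$ to be the \emph{last} visit of $a_2$ to $l$ before $t_2$, the agent is not at $l$ on $(\tau, t_2)$ and cannot return to $l$ before time $t_2 + l/v_2$; thus $a_2$ fails to cover $(l; t^*)$ for every $t^*$ with $\tau + 1 < t^* \le t_2 + l/v_2$, where $l/v_2 > 1/2$ again guarantees nonemptiness. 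Intersecting this with $a_1$'s blind interval isolates a time $t^*$ for which only $a_3$ could possibly cover $(l; t^*)$.

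Finally I would rule out $a_3$. Being the slowest agent, and being forced by Lemmas~\ref{clm:02}--\ref{clm:04} to keep returning to the endpoints, $a_3$ cannot remain at $l$ throughout $[t^*-1, t^*)$; locating it away from $l$ on this interval makes $(l; t^*)$ genuinely uncovered, contradicting that the strategy patrols $[0,l]$. I expect the main obstacle to be precisely this last coordination: guaranteeing that the two blind intervals actually overlap---their overlap is tight and leans on $l/v_i > 1/2$ for each $i$ together with the Case~II hypothesis $v_1 \le 2 v_2 + v_3$---while simultaneously pinning $a_3$ down well enough to exclude it from $(l; t^*)$. The arithmetic that balances these timing constraints, rather than any single conceptual step, is where the real work lies.
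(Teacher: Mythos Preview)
Your final step---ruling out $a_3$---is where the argument breaks, and it is not a matter of missing arithmetic. To show that $a_3$ fails to cover $(l;t^*)$ you would need $a_3(s)\neq l$ for \emph{every} $s\in[t^*-1,t^*)$; the observation that ``$a_3$ cannot remain at $l$ throughout'' is the wrong direction. More to the point, your plan cannot be salvaged: once you have established (correctly) that neither $a_1$ nor $a_2$ covers $(l;t^*)$ for some $t^*$ in the intersection of the two blind windows, the patrolling hypothesis \emph{forces} $a_3$ to cover it. Nothing in your setup constrains $a_3$ near time $t^*$, and Lemmas~\ref{clm:02}--\ref{clm:04} give no handle on keeping $a_3$ away from $l$ over a full unit interval. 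So the contradiction you are aiming for---an uncovered pair at the right endpoint---simply does not exist.

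The paper's proof turns this difficulty into the engine of the argument. Rather than trying to exclude $a_3$, it first shows (using only $a_1(t_1)=0$ and $a_2(t_2)=0$, not the hypothetical visit of $a_2$ to $l$) that $(l;t_1+l/v_1)$ is covered by neither $a_1$ nor $a_2$, hence $a_3$ \emph{must} visit $l$ at some time $t'\in[t_1+l/v_1-1,\,t_1+l/v_1)\subseteq[t_1-l/v_1,\,t_1+l/v_1]$. Now, assuming for contradiction that $a_2$ also visits $l$ at some $t$ in this same interval, Lemma~\ref{clm:04} applies to the pair of endpoint visits $a_2(t)=a_3(t')=l$ and produces a time $t''$ strictly between them with $a_1(t'')=l$. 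Since $t''\in(t_1-l/v_1,\,t_1+l/v_1)$, this contradicts $a_1(t_1)=0$. The key idea you are missing is precisely this use of Lemma~\ref{clm:04} to push the contradiction onto $a_1$ rather than onto $a_3$.
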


\begin{proof}
The pair $(l; t_1 + l / v _1)$ is not covered by $a_1$, because
  $(t _1 + l / v _1) - (t _1 - l / v _1) > 1$. 
It is not covered by $a _2$ either, because
\begin{equation*}
 \biggl( t_1 + \frac{l}{v_1} \biggr) - \biggl( t_2 - \frac{l}{v_2} \biggr) 
>
 t _1 + \frac{v _1 + v _2}{2 v _1} - t _2 + \frac{v _1 + v _2}{2 v _2}
=
 (t _1 - t _2) + 1 + \frac 1 2 \biggl( \frac{v_2}{v_1} + \frac{v_1}{v_2} \biggr) 
\geq
 1. 
\end{equation*}
Hence, it must be covered by $a_3$, 
i.e., $a_3$ visits $l$ at some time $
 t' 
\in 
 [t_1 + \frac{l}{v_1} - 1, t_1 + \frac{l}{v_1})
\subseteq 
 [t_1 - \frac{l}{v_1}, t_1 + \frac{l}{v_1}]
$. 
If we assume that 
$a _2(t) = l$ for some time $t \in [t_1 - \frac{l}{v_1}, t_1 + \frac{l}{v_1}]$, 
then, by Lemma~\ref{clm:04},
$a _1 (t'') = l$ for some $t'' \in (t, t') \subseteq (t_1 - \frac{l}{v_1}, t_1 + \frac{l}{v_1})$ (or $t'' \in (t', t) \subseteq (t_1 - \frac{l}{v_1}, t_1 + \frac{l}{v_1})$). 
This contradicts that 
$a_1 (t_1) = 0$ and $|t_1 - t''| < \frac{l}{v_1}$. 
Therefore, we conclude that 
$a _2$ cannot visit $l$ during $[t_1 - \frac{l}{v_1}, t_1 + \frac{l}{v_1}]$. 
\end{proof}

\begin{lemma}\label{clm:05}
$a_3(t) \neq l_1+l_2$  
for any $t \in [t_1 - \frac{l_2 + l_3}{v_1}, t_1 + \frac{l_2 + l_3}{v_1}]$
(see Fig.~\ref{fig:04}). 
\begin{figure}
\begin{center}
\includegraphics[scale=1.0]{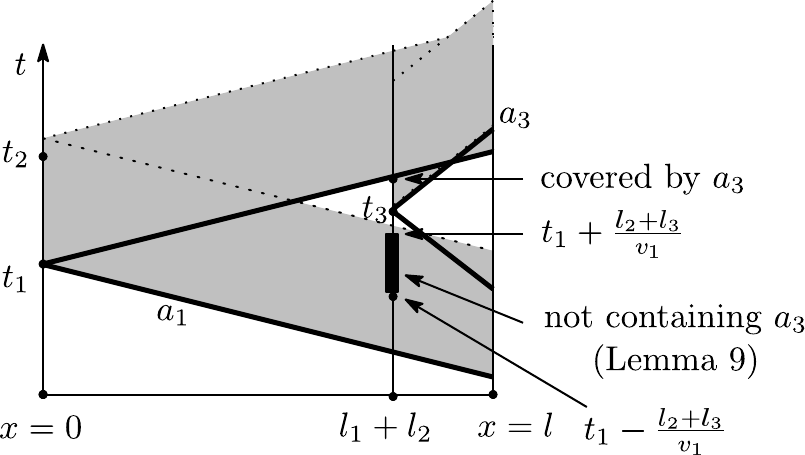}
\caption{Lemmas~\ref{clm:05} and~\ref{clm:06}}
\label{fig:04}
\end{center}
\end{figure}
\end{lemma}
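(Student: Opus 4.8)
The plan is to argue by contradiction, reusing the mechanism of Lemma~\ref{clm:10}: I assume that $a_3$ does reach $l_1 + l_2$ somewhere in the claimed interval, and then exhibit a single target time at which the right endpoint~$l$ is covered by none of the three agents. So suppose $a_3(s) = l_1 + l_2 = l - l_3$ for some $s \in [t_1 - \frac{l_2 + l_3}{v_1}, t_1 + \frac{l_2 + l_3}{v_1}]$, and I look for a time $t^*$ such that $(l; t^*)$ is uncovered.

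First I would collect, for each agent, the range of target times $t^*$ for which that agent \emph{cannot} cover $(l; t^*)$. Since $a_1(t_1) = 0$, agent $a_1$ stays off the point $l$ throughout the open interval $(t_1 - \frac{l}{v_1}, t_1 + \frac{l}{v_1})$, so it fails to cover $(l; t^*)$ for every $t^* \in (t_1 + 1 - \frac{l}{v_1}, t_1 + \frac{l}{v_1}]$. By Lemma~\ref{clm:10}, $a_2$ avoids $l$ on all of $[t_1 - \frac{l}{v_1}, t_1 + \frac{l}{v_1}]$, so it too fails to cover $(l; t^*)$ on this range (the binding restriction coming from $a_1$). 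Finally, because $a_3(s) = l - l_3$, agent $a_3$ needs time at least $l_3 / v_3$ to reach $l$, hence stays off $l$ on $(s - \frac{l_3}{v_3}, s + \frac{l_3}{v_3})$ and fails to cover $(l; t^*)$ for every $t^* \in (s + 1 - \frac{l_3}{v_3}, s + \frac{l_3}{v_3}]$.

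It then remains to verify that the two resulting windows, $(t_1 + 1 - \frac{l}{v_1}, t_1 + \frac{l}{v_1}]$ and $(s + 1 - \frac{l_3}{v_3}, s + \frac{l_3}{v_3}]$, always meet. Here I would exploit that the ratio $l_i / v_i$ is the same for every $i$: writing $\lambda := l / (v_1 + v_2 + v_3)$ we have $l_i = \lambda v_i$, so $\frac{l_1}{v_1} = \frac{l_3}{v_3} = \lambda$, and $l_i > v_i / 2$ gives $\lambda > \frac 1 2$. The two half-open intervals overlap precisely when $t_1 + \frac{l}{v_1} > s + 1 - \lambda$ and $s + \lambda > t_1 + 1 - \frac{l}{v_1}$. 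Checking the first at the upper extreme $s = t_1 + \frac{l_2 + l_3}{v_1}$ and the second at the lower extreme $s = t_1 - \frac{l_2 + l_3}{v_1}$ (the worst cases), and using $\frac{l}{v_1} - \frac{l_2 + l_3}{v_1} = \frac{l_1}{v_1} = \lambda$, both inequalities reduce to $2\lambda - 1 > 0$, which holds. Hence for every admissible $s$ there is a $t^*$ (necessarily $\geq 1$, by the choice $t_1 > 1 + \frac{l}{v_3}$) covered by none of $a_1$, $a_2$, $a_3$ --- the desired contradiction.

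The step I expect to be the main obstacle is this last overlap check: one must ensure the $a_1,a_2$-window and the $a_3$-window genuinely intersect for \emph{every} position of $s$ in the interval, not merely for $s$ near $t_1$. This is exactly what fixes the half-width $\frac{l_2 + l_3}{v_1}$ in the statement, since it is the largest offset of $s$ for which the margin $2\lambda - 1$ stays positive. I would also track the open/closed endpoints of the exclusion windows (the strict inequalities coming from $a_1(t_1) = 0$ and from $l_i > v_i/2$), but these are routine. Note that the case hypothesis $v_1 \leq 2v_2 + v_3$ is not used in this lemma; it only enters the later arguments.
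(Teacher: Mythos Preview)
Your proof is correct and follows essentially the same approach as the paper: assume $a_3(s) = l_1 + l_2$, then exhibit a target time at which $(l;t^*)$ is covered by none of the three agents, using Lemma~\ref{clm:10} for $a_2$ and the speed constraints for $a_1$ and $a_3$. The only difference is that the paper short-circuits your window-overlap computation by directly choosing $t^* = s + \tfrac{1}{2}$; since $l_1/v_1 = l_3/v_3 = \lambda > \tfrac{1}{2}$, the interval $[s - \tfrac{1}{2}, s + \tfrac{1}{2}]$ lies inside both $[t_1 - \tfrac{l}{v_1}, t_1 + \tfrac{l}{v_1}]$ and $(s - \tfrac{l_3}{v_3}, s + \tfrac{l_3}{v_3})$, which immediately gives the contradiction without the extremal case analysis.
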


\begin{proof}
Assume that $a_3(t) = l_1+l_2$ for some $t \in [t_1 - \frac{l_2 + l_3}{v_1}, t_1 + \frac{l_2 + l_3}{v_1}]$. 
Then, 
since $[t -\frac{1}{2}, t +\frac{1}{2}] \subseteq [t_1 - \frac{l}{v_1}, t_1 + \frac{l}{v_1}]$, 
neither $a_1$ nor $a_3$ covers $(l; t + \frac{1}{2})$. 
Furthermore, 
by Lemma~\ref{clm:10}, 
$(l; t + \frac{1}{2})$ is not covered by $a_2$ either. 
This is a contradiction. 
\end{proof}

\begin{lemma}\label{clm:06}
$a_3(t) = l_1+l_2$ for some 
$t$ such that $t_1 + \frac{l_2 + l_3}{v_1} < t < t_1 + \frac{l_1 + l_2}{v_1}$ 
(see Fig.~\ref{fig:04}). 
\end{lemma}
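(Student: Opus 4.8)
The plan is to force $a_3$ to visit $l_1+l_2$ by exhibiting a \emph{single} pair $(l_1+l_2; t^*)$ that can be covered by neither $a_1$ nor $a_2$, and then to use Lemma~\ref{clm:05} to confine the resulting visit of $a_3$ to the open interval in the statement. The natural choice is $t^* = t_1 + \frac{l_1+l_2}{v_1}$, the earliest instant at which $a_1$, starting from $0$ at time $t_1$, could reach $l_1+l_2$. Throughout I will use only the two distance constraints coming from $a_1(t_1)=0$ and $a_2(t_2)=0$ (with $t_1 \leq t_2 \leq t_1+1$), together with $l > (v_1+v_2+v_3)/2$ and the Case~II assumption $v_1 \leq 2v_2+v_3$.

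First I would rule out $a_1$: since $a_1(t_1)=0$ and $a_1$ has speed at most $v_1$, it can sit at $l_1+l_2$ only at times $s$ with $|s-t_1| \geq \frac{l_1+l_2}{v_1}$, whereas every $s$ in the covering window $[t^*-1, t^*)$ satisfies $|s-t_1| < \frac{l_1+l_2}{v_1}$ (the upper side is immediate, the lower side uses $2(l_1+l_2) > 2l_1 > v_1$). Ruling out $a_2$ is the crux. From $a_2(t_2)=0$, the agent $a_2$ can be at $l_1+l_2$ only when $|s-t_2| \geq \frac{l_1+l_2}{v_2}$. The ``late'' option $s \geq t_2 + \frac{l_1+l_2}{v_2}$ is impossible because $s < t^* = t_1 + \frac{l_1+l_2}{v_1} \leq t_2 + \frac{l_1+l_2}{v_2}$ (using $t_1 \leq t_2$ and $v_2 \leq v_1$); the ``early'' option $s \leq t_2 - \frac{l_1+l_2}{v_2}$ is excluded exactly when $t^*-1 > t_2 - \frac{l_1+l_2}{v_2}$, which, since $t_2 \leq t_1+1$, reduces to $(l_1+l_2)\bigl(\frac{1}{v_1}+\frac{1}{v_2}\bigr) > 2$. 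This is where the hypothesis $l > (v_1+v_2+v_3)/2$ does its work: substituting $l_1+l_2 = \frac{(v_1+v_2)l}{v_1+v_2+v_3}$ gives $(l_1+l_2)\bigl(\frac{1}{v_1}+\frac{1}{v_2}\bigr) = \frac{(v_1+v_2)^2 l}{v_1 v_2(v_1+v_2+v_3)} > \frac{(v_1+v_2)^2}{2v_1 v_2} \geq 2$ by AM--GM.

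Hence $a_3$ must cover $(l_1+l_2; t^*)$, i.e.\ $a_3(s) = l_1+l_2$ for some $s \in [t^*-1, t^*)$, and it remains only to locate $s$. On the right, $s < t^* = t_1 + \frac{l_1+l_2}{v_1}$ is automatic. On the left, I would check that $t^*-1 = t_1 + \frac{l_1+l_2}{v_1} - 1 \geq t_1 - \frac{l_2+l_3}{v_1}$, i.e.\ $\frac{l_1+2l_2+l_3}{v_1} \geq 1$; this is precisely where Case~II enters, since $l_1+2l_2+l_3 > \frac{v_1+2v_2+v_3}{2} \geq v_1$. Thus $s \geq t_1 - \frac{l_2+l_3}{v_1}$, while Lemma~\ref{clm:05} forbids $s \in [t_1 - \frac{l_2+l_3}{v_1}, t_1 + \frac{l_2+l_3}{v_1}]$, leaving only $t_1 + \frac{l_2+l_3}{v_1} < s < t_1 + \frac{l_1+l_2}{v_1}$, as required. (The interval is nonempty because $v_1 > v_3$; the degenerate case $v_1=v_3$ forces all speeds equal and is already covered by Theorem~\ref{thm:onespeed}.)

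I expect the main obstacle to be the non-coverage by $a_2$, and specifically the inequality $(l_1+l_2)\bigl(\frac{1}{v_1}+\frac{1}{v_2}\bigr) > 2$: it is the one place where the assumed fence length and the ordering of speeds must be combined, and it is essential both for excluding the ``early'' visit of $a_2$ and for guaranteeing that the chosen $t^*$ sits far enough to the right that Lemma~\ref{clm:05} alone pushes the $a_3$-visit past $t_1 + \frac{l_2+l_3}{v_1}$. Everything else is bookkeeping with the two triangle-inequality-type speed constraints.
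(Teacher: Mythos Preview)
Your proposal is correct and follows essentially the same route as the paper: you pick the same pair $\bigl(l_1+l_2;\, t_1+\tfrac{l_1+l_2}{v_1}\bigr)$, exclude $a_1$ via $2(l_1+l_2)>v_1$, exclude $a_2$ via the inequality $(l_1+l_2)\bigl(\tfrac{1}{v_1}+\tfrac{1}{v_2}\bigr)>2$ (the paper writes this as $l_1+l_2>\tfrac{v_1+v_2}{2}$ and then applies AM--GM, which is the same computation), and finally invoke Case~II together with Lemma~\ref{clm:05} to push the $a_3$-visit past $t_1+\tfrac{l_2+l_3}{v_1}$. Your explicit split into ``early'' and ``late'' options for $a_2$ and your remark on the degenerate case $v_1=v_3$ are welcome clarifications, but the argument is the paper's.
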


\begin{proof}
The pair $(l_1+l_2; t_1 + \frac{l_1 + l_2}{v_1})$ is not covered by $a_1$, 
because 
$\bigl( t_1 + \frac{l_1 + l_2}{v_1} \bigr) - \bigl( t_1 - \frac{l_1 + l_2}{v_1} \bigr) > 1$. 
It is not covered by $a_2$ either, because 
\begin{equation*}
  \biggl( t_1 + \frac{l_1 + l_2}{v_1} \biggr) 
 - 
  \biggl( t_2 - \frac{l_1 + l_2}{v_2} \biggr) 
> 
  t _1 + \frac{v _1 + v _2}{2 v _1} - t _2 + \frac{v _1 + v _2}{2 v _2} 
=
 (t _1- t _2) + 1 + \frac 1 2 \biggl( \frac{v_2}{v_1} + \frac{v_1}{v_2} \biggr) 
\geq
 1. 
\end{equation*}
Hence, it must be covered by $a_3$, 
which means that 
$a_3(t) = l_1 + l_2$ for some $
t \in [t_1 + \frac{l_1 + l_2}{v_1} - 1, t_1 + \frac{l_1 + l_2}{v_1})$. 

Since $
 \frac{l _1 + l _2}{v _1} + \frac{l_2 + l_3}{v_1}
=
 \frac{l _1 + 2 l _2 + l _3}{v _1} 
>
 \frac{v _1 + 2 v _2 + v _3}{2 v _1} 
\geq
 1
$ 
by the assumption $v_1 \leq 2 v_2 + v_3$, 
we have $t_1 + \frac{l_1 + l_2}{v_1} - 1 > t_1 - \frac{l_2 + l_3}{v_1}$. 
Hence, by Lemma~\ref{clm:05}, 
$a_3(t) = l_1 + l_2$ 
for some $t$ such that $t_1 + \frac{l_2 + l_3}{v_1} < t < t_1 + \frac{l_1 + l_2}{v_1}$. 
\end{proof}

Let $t_3$ be the minimum value such that 
$a_3(t_3) = l_1 + l_2$ and $t_1 + \frac{l_2 + l_3}{v_1} < t_3 < t_1 + \frac{l_1 + l_2}{v_1}$
(see Fig.~\ref{fig:04}). 

\begin{lemma}\label{clm:07}
$a_3(t)=l$ for some $t \in [t_1 + \frac{l}{v_1} - 1, t_3 - \frac{l_3}{v_3}]$. 
\end{lemma}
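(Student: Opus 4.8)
The plan is to reuse a fact that is already established \emph{inside} the proof of Lemma~\ref{clm:10}: the pair $(l; t_1 + \frac{l}{v_1})$ is covered neither by $a_1$ (because $a_1(t_1)=0$ and $\frac{2l}{v_1} > 1$, so $a_1$ is not at $l$ on the interval $(t_1 - \frac{l}{v_1}, t_1 + \frac{l}{v_1})$ which contains $[t_1 + \frac{l}{v_1} - 1, t_1 + \frac{l}{v_1})$) nor by $a_2$ (by Lemma~\ref{clm:10}), and hence must be covered by $a_3$. This yields a time $t'$ with $a_3(t') = l$ and $t' \in [t_1 + \frac{l}{v_1} - 1, t_1 + \frac{l}{v_1})$. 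So the real content of the lemma is to show that this $t'$ also satisfies $t' \leq t_3 - \frac{l_3}{v_3}$; combined with the lower bound $t' \geq t_1 + \frac{l}{v_1} - 1$ this places $t'$ in the claimed interval.

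Next I would bring in the position of $a_3$ at time $t_3$. Since $a_3(t_3) = l_1 + l_2 = l - l_3$, the agent is at distance $l_3$ from the endpoint $l$ at time $t_3$, and as its speed is at most $v_3$, any time at which it is at $l$ must differ from $t_3$ by at least $\frac{l_3}{v_3}$. Therefore $|t' - t_3| \geq \frac{l_3}{v_3}$, so either $t' \leq t_3 - \frac{l_3}{v_3}$ (the desired conclusion) or $t' \geq t_3 + \frac{l_3}{v_3}$.

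The remaining step, which I expect to be the main obstacle, is to rule out the possibility $t' \geq t_3 + \frac{l_3}{v_3}$, i.e.\ to show that the visit to $l$ guaranteed above lands \emph{before} $t_3$ rather than after it. For this I would use the identity $\frac{l_i}{v_i} = \frac{l}{v_1+v_2+v_3}$, valid for every $i$, so in particular $\frac{l_3}{v_3} = \frac{l_1}{v_1}$. Together with the lower bound $t_3 > t_1 + \frac{l_2 + l_3}{v_1}$ coming from the definition of $t_3$, this gives $t_3 + \frac{l_3}{v_3} > t_1 + \frac{l_2 + l_3}{v_1} + \frac{l_1}{v_1} = t_1 + \frac{l}{v_1}$, so that $t' \geq t_3 + \frac{l_3}{v_3}$ would force $t' > t_1 + \frac{l}{v_1}$, contradicting $t' < t_1 + \frac{l}{v_1}$. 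Hence $t' \leq t_3 - \frac{l_3}{v_3}$, completing the argument. The delicate point is precisely this side selection: it is exactly the common ratio $\frac{l_1}{v_1} = \frac{l_3}{v_3}$, matched against the lower bound on $t_3$, that pins the forced visit to the correct side of $t_3$.
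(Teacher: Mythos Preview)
Your proposal is correct and follows exactly the paper's approach: show that $(l; t_1 + \tfrac{l}{v_1})$ is not covered by $a_1$ or $a_2$, hence $a_3$ visits $l$ at some $t' \in [t_1 + \tfrac{l}{v_1} - 1, t_1 + \tfrac{l}{v_1})$, and then use $a_3(t_3) = l_1 + l_2$ to exclude $t' \in (t_3 - \tfrac{l_3}{v_3}, t_3 + \tfrac{l_3}{v_3})$. The paper simply writes ``By combining them, we obtain the claim'' at the point you call the ``side selection'', whereas you make explicit why $t' \geq t_3 + \tfrac{l_3}{v_3}$ is impossible via $\tfrac{l_3}{v_3} = \tfrac{l_1}{v_1}$ and $t_3 > t_1 + \tfrac{l_2+l_3}{v_1}$; this is the intended (if suppressed) reasoning.
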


\begin{proof}
The pair $(l; t_1 + \frac{l}{v_1})$ is not covered by $a_1$,
because 
$\bigl( t_1 + \frac{l}{v_1} \bigr) - \bigl( t_1 - \frac{l}{v_1} \bigr) > 1$.
By Lemma~\ref{clm:10}, it is not covered by $a_2$ either. 
Hence, 
$a_3(t) = l$ for some 
$t \in [t_1 + \frac{l}{v_1}-1, t_1 + \frac{l}{v_1})$. 
On the other hand, since $a_3(t_3) = l_1 + l_2$, 
we have $a_3(t) \neq l$ for any $t$ such that $t_3 - \frac{l_3}{v_3} < t < t_3 + \frac{l_3}{v_3}$. 
By combining them, 
we obtain the claim. 
\end{proof}

Let $t'_3$ be the maximum value such that 
$a_3(t'_3)=l$ and 
$t'_3 \in [t_1 + \frac{l}{v _1} - 1, t_3 - \frac{l_3}{v_3}]$. 
Then, 
$(l_1 + l_2; t_3)$ is not covered by $a_3$, 
because 
$t_3 > (t'_3 - \frac{l_3}{v_3}) + 1$ and 
$t'_3 + \frac{l_3}{v_3} > t_1 - \frac{l_2 + l_3}{v_1}$. 
It is not covered by $a_1$ either, because
\begin{equation*}
t_3 > t_1 + \frac{l_2 + l_3}{v_1} > \biggl( t_1 - \frac{l_1 + l_2}{v_1} \biggr) + 1
\end{equation*}
by $v_1 \leq 2v_2 + v_3$. 
Hence, 
it is covered by $a_2$, 
which means that 
$a_2 (t'_2) = l_1 + l_2$ for some $t'_2$ such that 
$t_3 - 1 \leq t'_2 < t_3$ (see Fig.~\ref{fig:05}). 

\begin{figure}
\begin{center}
\includegraphics[scale=1.0]{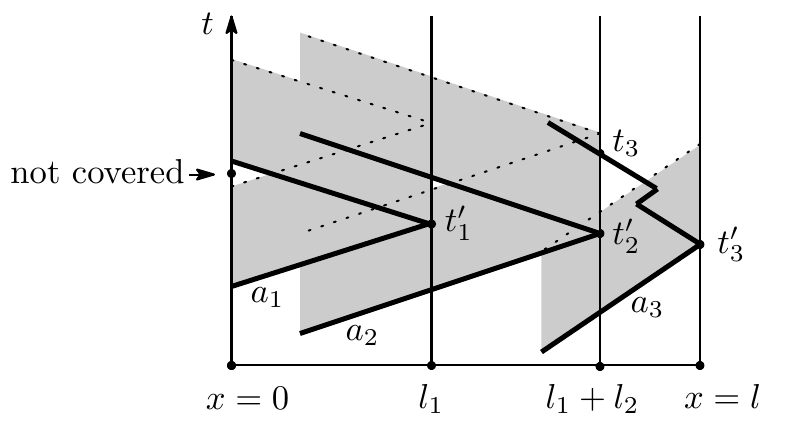}
\caption{Construction of $t'_3$, $t'_2$, and $t'_1$}
\label{fig:05}
\end{center}
\end{figure}

Since $(l_1; t'_2 + \frac{l_2}{v_2})$ is not covered by $a_2$ or $a_3$, 
it is covered by $a_1$, which means that 
$a_1(t'_1) = l_1$ for some $t'_1$ 
such that $t'_2 + \frac{l_2}{v_2} - 1 \leq t'_1 < t'_2 + \frac{l_2}{v_2}$. 
In this case, $(0; t'_1 + \frac{l_1}{v_1})$ is not covered by any of $a_1$, $a_2$, and $a_3$, 
which is a contradiction. 
We have proved Theorem~\ref{thm:k=3}. 

\section{Final remarks}
\label{section: final}

The partition-based strategy is widely used as part of 
multi-agent patrolling strategies. 
We studied its theoretical optimality 
in one of the simplest settings: 
the terrain is a line segment, 
and the agents are points with given maximal speeds. 

\paragraph{The weighted setting.}
It may be natural to consider the \emph{weighted} version of the problem
where each agent has a different power of influence. 
That is, the idle time $T_i > 0$ depends on the agent $a_i$, and
is called the \emph{weight} of $a_i$. 
The setting we have been dealing with in the previous sections
is the special case where $T _i = 1$ for all $i$. 
In the general setting, 
we say that $a_i$ \emph{covers} the pair $(x; t^*)$ if 
$a_i(t) = x$ for some $t \in [t^* -T_i, t^*)$. 
The agents $a _1$, \ldots, $a _k$ are said to \emph{patrol} $[0, l]$ if 
for any $x \in [0, l]$ and $t ^* \in [\max_i ( T_i ), \infty)$, 
the pair $(x; t ^*)$ is covered by some $a _i$. 

As in the unweighted case, 
we can consider the partition-based strategy. 
This time, 
each agent $a_i$ 
is assigned a segment of length proportional to 
the weighted speed $v_i T_i$. 

Theorem~\ref{thm:onespeed} remains true 
in this general setting: 
the partition-based strategy is optimal when the agents have 
different weights $T _i$ but the same speed~$v$. 
To see this, suppose that we could patrol a fence of length 
$l = \alpha + \sum _{i = 1} ^k v T _i / 2$ 
for some $\alpha > 0$. 
Let $\tau = 2 \alpha / k v$. 
Since an agent of weight $T _i$ can be 
simulated by $\lceil T _i / \tau \rceil$ agents of weight~$\tau$
moving in parallel, 
this fence can be patrolled by 
$\kappa = \sum _{i = 1} ^k \lceil T _i / \tau \rceil$ agents, 
all with weight~$\tau$ (and speed~$v$). 
This contradicts (a suitably rescaled version of) Theorem~\ref{thm:onespeed},
since $
 l 
=
 k \tau v / 2 + \sum _{i = 1} ^k T _i v / 2
=
 \sum _{i = 1} ^k (T _i / \tau + 1) v \tau / 2
>
 \kappa v \tau / 2
$. 

Theorem~\ref{thm:k=2} (optimality of the partition-based strategy 
for two agents) also remains true for weighted agents: 
the proof goes through if we set
$l _i = v _i T _i l / (v _1 T _1 + v _2 T _2)$ instead. 

However, Theorem~\ref{thm:k=3} (optimality of the partition-based strategy 
for three agents) fails for the weighted setting. 
To see this, 
consider our first example for Theorem~\ref{thm:k=6} (Fig.~\ref{figure: six agents}), 
and regard the four agents in the left diagram as one agent with weight $4$. 

\paragraph{Summary of our results.}
Thus, our current knowledge can be summarized as follows. 
\begin{itemize}
\item 
 The partition-based strategy is optimal when 
 all agents (possibly weighted) have the same speed 
 (Theorem~\ref{thm:onespeed}), 
 but not when there are two distinct speeds (Fig.~\ref{figure: nine agents}). 
\item 
 The partition-based strategy is optimal when 
 there are two agents with different speeds and weights (Theorem~\ref{thm:k=2}), 
 but not when there are three (Fig.~\ref{figure: six agents}). 
\item 
 The partition-based strategy is optimal when 
 there are three agents with the same weight (Theorem~\ref{thm:k=3}), 
 but not when there are six (Fig.~\ref{figure: six agents}). 
\end{itemize}
The third part settles 
a conjecture of Czyzowicz et al.~\cite{esa2011}, 
but 
our proof for three agents is already quite involved
and seems hard to generalize. 
It remains open whether the partition-based strategy is optimal 
for four and five (unweighted) agents.

\paragraph{Related work and generalizations.}
We considered the patrolling problem
in one of its most basic forms: 
the terrain to be patrolled is a line segment, 
every point in the terrain must be visited, 
and each agent is a point with a maximum speed. 
The problem setting can be generalized in many ways. 
Another simple terrain that has been studied 
in Czyzowicz et al.~\cite{esa2011} is a cycle, 
where again it turns out that 
simple strategies may not be optimal
(see also Dumitrescu, Ghosh and T\'oth~\cite{Dumitrescu2014}). 
Collins et al.~\cite{Collins13} study the patrolling problem
where only part of the fence needs to be visited frequently. 
Chen, Dumitrescu and Ghosh~\cite{Chen13} 
and 
Czyzowicz et al.~\cite{visibility}
discuss agents with some visibility. 
Czyzowicz et al.~\cite{beach} study the setting 
where agents can move faster when walking without watching 
(although their problem is to cover the line segment just once, 
rather than patrolling perpetually). 

For practical purposes, 
it is important to consider decentralized settings 
where agents need to cooperate with limited global knowledge
or computational power~\cite{suzuki-yamashita}. 
The fact that the partition-based strategy is not always optimal
may be bad news in this context, 
since it is one of the simplest strategies
to be realized in a distributed way, 
using systems of self-stabilizing robots, 
e.g., in models of ``bouncing robots''~\cite{bounce}. 
Thus a natural question to ask next is 
whether and how movements better than the partition-based strategy
can be realized in various distributed settings. 

\paragraph{A revised conjecture.}
Since the partition-based strategy 
covers each $(x; t) \in [0, l] \times [1, \infty)$ only doubly, 
it achieves a $2$-approximation
(for the problem of finding the longest possible fence that can be patrolled). 
That is, 
no strategy patrols a fence longer than $v _1 + \dots + v _k$
(in the unweighted setting). 
Although we have shown that 
the partition-based strategy is not always optimal, 
it may still be somewhat close to being optimal, 
given that it is outperformed only slightly 
by our examples for Theorem~\ref{thm:k=6}. 
In other words, the following may be the case, 
with a constant~$c$ fairly close to $1 / 2$: 

\begin{conjecture}
There is a constant $c < 1$ such that 
for any $k$ and any $v _1$, \ldots, $v _k$, 
no strategy can patrol a fence longer than $c (v _1 + \dots + v _k)$. 
\end{conjecture}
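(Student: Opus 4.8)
The plan is to strengthen the trivial $2$-approximation (no strategy beats $v_1 + \dots + v_k$) by exhibiting a constant fraction of \emph{unavoidable overlap} in any patrolling schedule. Fix a long time horizon $T$ and work in the space--time rectangle $[0,l]\times[1,T]$. For each agent $a_i$ and each level $x\in[0,l]$, let $N_i(x)$ be the number of times the trajectory $t\mapsto a_i(t)$ crosses $x$ during $[0,T]$. Two elementary facts drive everything. First, by the Banach indicatrix (coarea) formula, $\int_0^l N_i(x)\,dx$ equals the total variation of $a_i$ on $[0,T]$, which is at most $v_i T$. Second, the patrolling requirement forces the visit times to each $x$ to be at most $1$ apart, so $\sum_i N_i(x)\ge T-1$ for almost every $x$. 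Combining these and letting $T\to\infty$ already recovers $l\le v_1+\dots+v_k$. The whole game is therefore to show that the \emph{supply} $S:=\sum_i\int_0^l N_i(x)\,dx$, which satisfies $S\le T\sum_i v_i$, is in fact forced to be wastefully large, $S\ge(1+\delta)\,lT$ asymptotically for a fixed $\delta>0$; this immediately gives the conjecture with $c=1/(1+\delta)$, since then $(1+\delta)lT\le S\le T\sum_i v_i$.

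The source of the slack I would pursue is the overlap created whenever an agent reverses direction. If $a_i$ moves at full speed $v_i$ and turns around at a position $y$ (a local maximum of its trajectory) at time $t_0$, then for every $\epsilon$ the point $y-\epsilon$ is crossed twice within time $2\epsilon/v_i$, so the two unit-height coverage ``smears'' that these crossings contribute to the column at $y-\epsilon$ overlap in a segment of length $1-2\epsilon/v_i$ whenever this is positive. Integrating $\epsilon$ over $[0,v_i/2]$ shows that a turn of depth at least $v_i/2$ wastes coverage area at least $v_i/4$; symmetric waste occurs at the endpoints $0$ and $l$, where every visiting agent is forced to turn. An agent confined to $[0,l]$ and moving at full speed has total variation $v_i T$, and each monotone run between consecutive turns contributes variation at most $l$, so it must perform at least $\approx v_i T/(2l)$ turns, producing total turn-waste of order $\sum_i v_i^2 T/l$. (If an agent instead moves slowly, the deficit appears directly in $\int_0^l N_i\le v_iT$, so waste accrues either way; and if its turns are shallow rather than deep, the waste \emph{per unit of variation} is in fact larger, tending to $1/2$, so the dangerous regime is precisely that of long monotone traversals.)

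The main obstacle is that this turn-waste estimate is not yet uniformly a constant fraction of $lT$. The required inequality reduces to $\sum_i v_i^2\gtrsim l^2$, which is comfortable when a few fast agents dominate but degrades when $l$ is close to $\sum_i v_i$ through \emph{many} comparably slow agents. That regime, however, is exactly where the equal-speed bound (Theorem~\ref{thm:onespeed}) already forces $c=1/2$, so the genuine task is to interpolate: to marry the turn-waste accounting for the fast agents with the partition-type argument for the slow ones via a charging scheme that assigns every space--time point to a responsible agent and never refunds the same overlap twice. The delicate point---and the reason the conjecture remains open---is the cooperative structure displayed by the counterexamples of Theorem~\ref{thm:k=6}, where slow agents patch the triangular gaps left behind while synchronized fast agents make long, nearly waste-free traversals. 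One must rule out the possibility that such cooperation drives the aggregate overlap to zero as $k$ grows. I expect that converting the local turn-waste bound into a global, speed-profile-independent constant $\delta$ is the crux, and that it is here that a genuinely new idea will be needed.
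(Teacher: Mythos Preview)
The paper does not prove this statement: it is posed as an open conjecture, and the only upper bound the paper offers is the trivial $l \le v_1 + \dots + v_k$ (that is, $c=1$), which is exactly what your coarea/crossing-count argument in the first paragraph recovers. Everything else the paper records about $c$ goes the other way---lower bounds on $c$ coming from strategies that outperform the partition-based one. So there is no proof in the paper to compare your attempt against.

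As for the attempt itself, the gap you name is the real one, and it is fatal as things stand. Your turn-waste calculation yields overlap of order $T\sum_i v_i^2/l$, so the desired $S \ge (1+\delta)\,lT$ would require $\sum_i v_i^2 \gtrsim \delta\, l^2$. When $l$ is close to $\sum_i v_i$ this is Cauchy--Schwarz in the wrong direction: for $k$ equal-speed agents it gives only $\delta \lesssim 1/k$. Your proposed escape---fall back on Theorem~\ref{thm:onespeed} in the many-slow-agents regime---does not close the hole, because that theorem requires \emph{exactly} equal speeds (its swap-and-relabel argument collapses the moment two speeds differ), and nothing in the paper handles speeds that are merely comparable. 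The ``charging scheme'' you allude to is precisely the missing ingredient; you have not supplied it, and until someone does the argument does not improve on $c=1$. You have accurately located the difficulty, but what you have written is a research plan, not a proof---and you say as much in your final sentence.
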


The partition-based strategy gives a lower bound of $1 / 2$ 
for such a constant $c$. 
Our first example (Fig.~\ref{figure: six agents}) 
gives $21/41 = 0.5121\ldots {}$. 
After a preliminary version of this paper~\cite{isaac} was presented, 
Chen, Dumitrescu and Ghosh~\cite{Chen13} 
(see also Dumitrescu, Ghosh and T\'oth~\cite{Dumitrescu2014})
improved this bound to $25/48 = 0.5208\ldots {}$. 
Determining the least $c$ is an interesting question. 

\paragraph{Note added for the arXiv version.}
Kawamura and Soejima~\cite{soejima} recently announced 
a lower bound of $2 / 3$. 
The above conjecture still remains open. 

\paragraph{Acknowledgements.}
We thank Yoshio Okamoto for suggesting this research. 
We also thank Taisuke Izumi, Kohei Shimane, Yushi Uno and 
the anonymous referees for helpful comments.

\end{document}